\newcommand{\abs}[1]{\ensuremath{|#1|}}
\newcommand{\Norm}[2]{\ensuremath{\left|\!\left|#1\right|\!\right|_{#2}}}
\newcommand{\tr}{\textnormal{tr}}
\newcommand{\trace}[1]{\ensuremath{\tr (#1)}}
\newcommand{\ket}[1]{| #1 \rangle}
\newcommand{\bra}[1]{\langle #1 |}
\newcommand{\braket}[2]{\langle #1 | #2 \rangle}
\newcommand{\bracket}[3]{\langle #1 | #2 | #3 \rangle}
\newcommand{\proj}[2]{| #1 \rangle\!\langle #2 |}
\newcommand{\id}{\ensuremath{\mathds{1}}}
\newcommand{\opid}{\ensuremath{\mathcal{I}}}
\newcommand{\cC}{\mathcal{C}}
\newcommand{\cE}{\mathcal{E}}
\newcommand{\cF}{\mathcal{F}}
\newcommand{\cG}{\mathcal{G}}
\newcommand{\cM}{\mathcal{M}}
\newcommand{\cS}{\mathcal{S}}
\newcommand{\cT}{\mathcal{T}}
\newcommand{\cU}{\mathcal{U}}
\theoremstyle{plain}
\newtheorem{lemma}{Lemma}
\newtheorem{theorem}[lemma]{Theorem}
\theoremstyle{definition}
\newtheorem{definition}{Definition}
\begin{document}
\title{Connected components of irreducible maps and 1D quantum phases}
\author{Oleg Szehr}
\email{oleg.szehr@posteo.de}
\affiliation{Centre for Quantum Information, University of Cambridge,
Cambridge CB3 0WA,
United Kingdom}
\author{Michael M. Wolf}
\email{wolf@ma.tum.de}
\affiliation{Zentrum Mathematik, Technische Universit\"{a}t M\"{u}nchen, 85748 Garching, Germany}
\date{\today}
\begin{abstract}

We investigate elementary topological properties of sets of completely positive (CP) maps that arise in quantum Perron-Frobenius theory. We prove that the set of primitive CP maps of fixed Kraus rank is path-connected and we provide a complete classification of the connected components of irreducible CP maps at given Kraus rank and fixed peripheral spectrum in terms of a multiplicity index. 

These findings are then applied to analyse 1D quantum phases by studying equivalence classes of translational invariant Matrix Product States that correspond to the connected components of the respective CP maps. Our results extend the previously obtained picture in that they do not require blocking of physical sites, they lead to analytic paths and they allow to decompose into ergodic components and to study the breaking of translational symmetry. 
\end{abstract}

\maketitle

\tableofcontents
\section{Introduction}

This work is devoted to the study of elementary topological properties of sets of irreducible completely positive (CP) maps. We analyse in particular the connected components of the sets of irreducible maps with given Kraus rank and fixed peripheral spectrum. The motivation for this analysis is the intention to contribute to the understanding and characterization of  phase diagrams of quantum many-body systems in 1D. 

The link between connected components of classes of CP maps and 1D quantum phases is given by the framework of Matrix Product States (MPS~\cite{MPS}) and Finitely Correlated States (FCS~\cite{MPSwerner}). On the one hand, these allow non-trivial statements about 1D quantum phases via a frustration-free parent Hamiltonian model \cite{MPS,MPSwerner,Nacht1} and constitute a natural framework for the analysis of gapped 1D spin systems~\cite{Verstr12,Has1,Has2}. On the other hand, they are characterized by CP maps with irreducible maps being the essential elementary building blocks.

Commonly, the ground states of two gapped 1D system are said to belong to the same \lq\lq{}quantum phase\rq\rq{} iff they can be connected by a path of gapped local Hamiltonians. That is, the ground states of ~$H^{(0)}=\sum_i  \tau^i(h^{(0)})$ and $H^{(1)}=\sum_i \tau^i(h^{(1)})$ belong to the same quantum phase iff
there is a family of local interactions $h^{(t)}$ that is continuous in $t\in[0,1]$ so that the global Hamiltonian $H^{(t)}=\sum_i  \tau^i(h^{(t)})$ has a non-vanishing energy gap above the set of ground states~\cite{phasesNi,phases,phasesBa,Bach, Bach1, Bach2}. Here $\tau^i$ denotes a lattice translation. Along such a path, expectation values of local observables change continuously while the structure of the ground state subset remains unchanged. 

Depending on whether or not one considers blocking of sites, symmetry protection, stability, extension of the local Hilbert spaces or focusses on order parameters or ground state manifolds, etc. there are arguably several meaningful ways to define a quantum phase, which in addition have to cope with the general undecidability of spectral gaps \cite{undecidablegap} and phase transitions within the MPS framework~\cite{WOVC}.

For this reason we prefer to abstain from defining the notion of a quantum phase, but speak instead of \emph{equivalence classes} of local gapped Hamiltonians under
\emph{analytic} deformations of local interactions. Roughly speaking, we consider $h^{(0)}$ and $h^{(1)}$ to be equivalent iff an analytic interpolating Hamiltonian path $h^{(t)}$ exists so that $H^{(t)}=\sum_i \tau^i(h^{(t)})$ is gapped even in the limit of an infinite chain (cf. Sec.\ref{sec:phases}). To provide sufficient conditions for equivalence and to construct the respective paths $h^{(t)}$ we then rely on our insights about connectedness of sets of CP maps. 

A first classification of 1D quantum phases within the MPS framework was given in~\cite{phasesNi,phases}. Our work provides a more detailed picture of this classification by providing an exhaustive analysis on the level of CP maps. Translated to the MPS framework this implies that, in contrast to~\cite{phasesNi,phases}, we do not need to \emph{block} physical sites nor do we change the physical dimension of sites along the interpolating path. In this way, our construction automatically yields Hamiltonian paths that preserve translational symmetry as well as the breaking of this symmetry within the space of ground states. 

A very similar approach to ours was recently taken in the independent contribution~\cite{phasesBa}.
There the authors also prove path-connectedness of primitive CP maps under fixed Kraus rank, albeit only with $\cC^{1}$-paths, and apply their finding to the classification of 1D quantum phases. Showing the result via a different technique enables us to give a shorter proof that immediately implies the existence of analytic paths, allows to fix the Kraus rank (rather than to bound it from above) and yields an extension to more general irreducible maps, which were not studied in~\cite{phasesBa}.

\section{Topological Properties of sets of Completely Positive Maps}
\subsection{Preliminaries on Completely Positive Maps}
\label{prel:not}

We study linear maps $\cT:\cM_{D}(\mathbb{C})\rightarrow\cM_{D}(\mathbb{C})$ acting on the set of complex $D\times D$ matrices $\cM_{D}(\mathbb{C})$. The spectrum $\rho(\cT)$ of $\cT$ is the set of $\lambda\in\mathbb{C}$ so that $\lambda\opid-\cT$ is not invertible, where $\opid:\cM_{D}(\mathbb{C})\rightarrow\cM_{D}(\mathbb{C})$ is the identity map. The spectral radius is defined by
$\mu=\max\{\abs{\lambda}\:|\:\lambda\in\rho(\cT)\}$ and the peripheral spectrum is the set of eigenvalues of magnitude $\mu$. Maps of the structure $\cT(X):=\sum_i A_iXA_i^\dagger$ with $\{A_i\}_i\in\cM_{D}(\mathbb{C})$ are called completely positive (CP) and the operators $\{A_i\}_i$ are referred to as Kraus operators. We denote the set of completely positive maps on $\cM_{D}(\mathbb{C})$ by $\mathfrak{T}$. The minimal number of Kraus operators required to represent $\cT$ is the Kraus rank of $\cT$ and we write $\mathfrak{T}_{(r)}$ for the set of CP maps of fixed Kraus rank $r$. For each CP map the dual map $\cT^*$ is given by $\cT^*(X):=\sum_i A_i^\dagger XA_i$. $\cT^*$ is simply the adjoint of $\cT$ with respect to the Hilbert-Schmidt inner product $(X,Y)\mapsto\trace{X^{\dagger} Y}$. A completely positive map $\cT$ is called unital (CPU) iff it preserves the identity operator $\cT(\id)=\id$ and trace-preserving (CPTP) iff $\cT^*(\id)=\id$. By Gelfand's formula CPU (and CPTP) maps have spectral radius $1$. For any linear map $\cT$ the Choi matrix representation is $\omega(\cT):=(\cT\otimes\opid)(\proj{\varphi}{\varphi})$ with $\varphi=\sum_i\ket{ii}$. It is well known that $\cT$ is completely positive iff $\omega(\cT)$ is positive semi-definite and that the Kraus rank of $\cT$ equals $\textnormal{rank}(\omega(\cT))$. In the following we introduce some core concepts from the so-called Perron-Frobenius theory of positive maps. Later, these concepts will be pivotal to our study of 1D quantum phases. See~\cite{specevans,farenick,albev,Michael:Skript} for an introduction to the topic.
\begin{definition}[Irreducibility~\cite{Daviesirred,specevans,albev}]\label{irreducible}
A completely positive map $\cT$ on $\cM_D(\mathbb{C})$ is called \emph{irreducible} if one of the following equivalent properties holds.
\begin{enumerate}
\item $\cT$ has no non-trivial reducing hereditary subalgebras of $\cM_D(\mathbb{C})$. That is, if $P\in\cM_D{(\mathbb{C})}$ is a Hermitian projector with $\cT(P\cM_D{(\mathbb{C})}P)\subseteq P\cM_D{(\mathbb{C})}P$ then $P\in\{0,\id\}$.
\item The spectral radius of $\cT$ is a non-degenerate eigenvalue and the corresponding left- and right- eigenvectors are positive definite matrices.
\end{enumerate}
\end{definition}
We will denote by $\mathfrak{I}\subseteq\mathfrak{T}$ the set of irreducible CP maps on $\cM_{D}(\mathbb{C})$ and by $\mathfrak{I}_{(r)}\subseteq\mathfrak{T}_{(r)}$ the respective subset of maps of Kraus rank $r$.
Spectral properties of positive maps on $C^*$-algebras have been studied in detail in~\cite{specevans} and we will heavily draw on the methods of this article. See~\cite[Theorem~2.3-2.4]{specevans} and~\cite{Michael:Skript} for the equivalence of \emph{1.} and \emph{2.}. Of particular importance will be the class of irreducible CP maps with trivial peripheral spectrum, where~every eigenvalue that is not the spectral radius has strictly smaller magnitude:
\begin{definition}[Primitivity]\label{primitive}
A completely positive map $\cT$ on $\cM_D(\mathbb{C})$ is called \emph{primitive} if one of the following equivalent properties holds.
\begin{enumerate}
\item There is $n\in\mathbb{N}$ so that $\textnormal{rank}(\omega(\cT^n))=D^2$.
\item We have $\textnormal{rank}(\omega(\cT^{D^4}))=D^2$.
\item The spectral radius of $\cT$ is a non-degenerate eigenvalue, all other eigenvalues have strictly smaller magnitude, and the left- and right- eigenvectors corresponding to the spectral radius are positive definite matrices.
\end{enumerate}
\end{definition}
We write $\mathfrak{P}\subseteq\mathfrak{I}$ for the set of primitive CP maps and $\mathfrak{P}_{(r)}\subseteq\mathfrak{I}_{(r)}$ stands for the subset of maps of Kraus rank $r$.
Point \emph{1.}~in Definition~\ref{primitive}~says that a certain power of $\cT$ has maximal Kraus rank. This is equivalent to saying that the list of matrix products $\{A_{i_1}\cdot A_{i_2}\cdot...\cdot A_{i_n}\}_{i_k=1,...,r}$, where $r$ is the Kraus rank of $\cT$, spans the entire algebra $\cM_D(\mathbb{C})$. Note that if \emph{1.} holds for a given $n_0$ then it does so for any $n\geq n_0$. If $\cT$ is CPTP and primitive, point \emph{3.}~implies that for any initial state the semi-group $\{\cT^n\}_{n\geq0}$ converges to a unique full-rank stationary state~\cite{szrewo}. 

For CPTP maps the equivalence of \emph{1.-3.}~is established in~\cite{Wielandt}. In particular, the implication from \emph{2.}~to \emph{1.}~is a consequence of quantum Wielandt's inequality. Our definition goes slightly beyond the framework of~\cite{Wielandt} as we drop the TP requirement. However, a marginal modification of the proofs establishes the equivalence \emph{1.-3.} (cf.~Appendix~\ref{equivalpr} for completeness).

\begin{lemma}[\cite{specevans}]\label{irredprop}
Let $\cT:\cM_D{(\mathbb{C})}\rightarrow \cM_D{(\mathbb{C})}$ with $\cT(X)=\sum_i A_i^\dagger XA_i$ be irreducible and unital. It follows that:
\begin{enumerate}
\item The peripheral spectrum of $\cT$ is $\{e^{2\pi i \frac{k}{m}}\}_{k=1,...,m}$ for some $m\in\{1,...,D\}$ and each of these eigenvalues has algebraic multiplicity $1$.
\item There is a unitary $U$ with $\cT(U^k)=\beta^k U^k\ \forall k\in\mathbb{Z}$, where $\beta:=e^{2\pi i \frac{k}{m}}$.
\item $U$ has spectral decomposition $U=\sum_{k=1}^{m}\beta^k P_k$ with spectral projectors $P_k$ of dimension $d_k=\trace{P_k}$ and $\sum_{k=1}^{m} d_k=D$.
\item For the Kraus operators it holds that $A_i P_{k-1}=P_{k}A_i$ for all $i$ and $k$, where $P_0:=P_m$.
\end{enumerate}
\end{lemma}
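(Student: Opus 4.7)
The plan is to establish the four claims sequentially, leveraging the Kadison--Schwarz inequality for $2$-positive maps together with the Perron--Frobenius property in item 2 of Definition~\ref{irreducible}.

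I would first show that every peripheral eigenvector is a scalar multiple of a unitary lying in the \emph{multiplicative domain} of $\cT$. Since $\cT^*$ is also irreducible with spectral radius $1$, there is a faithful state $\rho>0$ with $\cT^*(\rho)=\rho$. For any $\cT(X)=\alpha X$ with $|\alpha|=1$, Kadison--Schwarz gives $\cT(X^\dagger X)\geq \cT(X)^\dagger\cT(X)=X^\dagger X$, and pairing with $\rho$ yields
\[
  \trace{\rho\bigl(\cT(X^\dagger X)-X^\dagger X\bigr)}=\trace{\cT^*(\rho)\,X^\dagger X}-\trace{\rho X^\dagger X}=0,
\]
so by faithfulness $\cT(X^\dagger X)=X^\dagger X=\cT(X)^\dagger\cT(X)$. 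This is the defining property of the multiplicative domain, and the fact that $X^\dagger X$ is a fixed point of $\cT$, combined with the one-dimensionality of the fixed-point space (irreducibility), forces $X^\dagger X\propto\id$, so $X$ is a scaled unitary (applying the same argument to $X^\dagger$ handles the right version). The multiplicative domain is a $\ast$-subalgebra on which $\cT$ is multiplicative, so for peripheral eigenvalues $\alpha,\beta$ with unitary eigenvectors $U_\alpha,U_\beta$ one has $\cT(U_\alpha U_\beta)=\alpha\beta\,U_\alpha U_\beta$; the peripheral spectrum is thus a finite subgroup of the unit circle, hence cyclic: $\{e^{2\pi ik/m}\}_{k=1,\dots,m}$. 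Taking $U$ to be a unitary eigenvector of the generator $\beta=e^{2\pi i/m}$, iteration inside the multiplicative domain gives $\cT(U^k)=\beta^kU^k$, which is part 2. Algebraic multiplicity $1$ at each peripheral eigenvalue (the remaining content of part 1) follows from the operator-norm contractivity of unital $2$-positive maps (Russo--Dye): a non-trivial Jordan block at a unit-modulus eigenvalue would force polynomial growth of $\|\cT^n\|$.

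For part 3, $\cT(U^m)=\beta^mU^m=U^m$ is a fixed point, hence $\propto\id$; rescaling $U$ by a phase we may take $U^m=\id$, so $\mathrm{spec}(U)\subseteq\{\beta^k\}_{k=1}^{m}$. The elements $\id,U,\dots,U^{m-1}$ are eigenvectors of $\cT$ for distinct eigenvalues, hence linearly independent, which forces all $m$ roots of unity to actually appear in $\mathrm{spec}(U)$. Writing the spectral decomposition $U=\sum_{k=1}^{m}\beta^kP_k$ with orthogonal eigenprojectors $P_k$ and $d_k:=\trace{P_k}\geq1$, completeness $\sum_kP_k=\id$ gives $\sum_kd_k=D$.

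For part 4, inverting the relation $U=\sum_k\beta^kP_k$ gives $P_\ell=\tfrac{1}{m}\sum_k\bar\beta^{\ell k}U^k$, and part 2 yields $\cT(P_\ell)=\tfrac{1}{m}\sum_k\bar\beta^{(\ell-1)k}U^k=P_{\ell-1}$, i.e., $\sum_iA_i^\dagger P_\ell A_i=P_{\ell-1}$. Multiplying both sides on the left and the right by $Q:=\id-P_{\ell-1}$ annihilates the right-hand side, while the left-hand side becomes a sum $\sum_iY_i^\dagger Y_i$ of positive semidefinite operators with $Y_i:=P_\ell A_iQ$; positivity forces each $Y_i=0$, so $P_\ell A_iP_j=0$ whenever $j\neq\ell-1$. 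This block-shift structure is exactly $A_iP_{k-1}=P_kA_iP_{k-1}=P_kA_i$, the claimed relation. The main obstacle throughout is the algebraic-multiplicity-one assertion in part 1: it is not a direct Perron--Frobenius consequence and essentially requires the norm contractivity of unital $2$-positive maps to rule out Jordan blocks at peripheral eigenvalues; the remaining pieces are formal consequences of the multiplicative-domain structure and routine finite-dimensional spectral theory.
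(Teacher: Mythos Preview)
The paper does not supply its own proof of this lemma; it is quoted from~\cite{specevans} and used as a black box. So there is no in-paper argument to compare against, and your task reduces to giving a self-contained proof along the lines of the cited reference.

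Your argument is correct and is essentially the standard Evans--H{\o}egh-Krohn route: Kadison--Schwarz plus the faithful invariant state force equality $\cT(X^\dagger X)=\cT(X)^\dagger\cT(X)$ for any peripheral eigenvector $X$, which places $X$ in the multiplicative domain and makes it a scalar multiple of a unitary; multiplicativity then shows the peripheral spectrum is a finite subgroup of the circle, hence cyclic of some order $m$. Your treatment of algebraic simplicity via operator-norm contractivity of unital $2$-positive maps (so that a nontrivial Jordan block at a unimodular eigenvalue would contradict $\sup_n\|\cT^n\|<\infty$) is clean and correct. The derivation of parts~3 and~4 is also fine: the discrete Fourier inversion $P_\ell=\tfrac1m\sum_k\bar\beta^{\ell k}U^k$ together with $\cT(U^k)=\beta^kU^k$ gives $\cT(P_\ell)=P_{\ell-1}$, and your positivity argument $\sum_i(P_\ell A_iQ)^\dagger(P_\ell A_iQ)=0$ with $Q=\id-P_{\ell-1}$ correctly yields the block-shift relation $A_iP_{k-1}=P_kA_i$. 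The bound $m\leq D$ is implicit in your part~3 argument (the $D\times D$ unitary $U$ has $m$ distinct eigenvalues), which is worth stating explicitly since it is part of claim~1.
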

We call a CP map $\cT$ \emph{irreducible of degree $m$} if it is irreducible and the peripheral spectrum contains $m$ eigenvalues and we write $\mathfrak{I}_{(r,m)}\subseteq\mathfrak{I}_{(r)}$ for the set of irreducible maps of Kraus rank $r$ and degree $m$. Clearly we have $\mathfrak{I}_{(r,1)}=\mathfrak{P}_{(r)}$. We observe that $\cT$ is irreducible of degree $m$ iff $\cT^*$ has this property. The intertwining relation \emph{4.}~has the important consequence that it reveals the structure of Kraus operators of irreducible CPU maps. All Kraus operators can simultaneously be written in the form
\begin{align*}
A_i=
\begin{pmatrix}
0 & 0 & \cdots & 0 & A_i^{(m)}\\
A_i^{(1)} & 0 & \cdots & 0 & 0\\
0 & A_i^{(2)} & \ddots & \vdots & \vdots\\
\vdots & \ddots & \ddots & 0 & 0\\
0 & \cdots & 0 & A_i^{(m-1)} & 0\\
\end{pmatrix}
\end{align*}
with $d_{k+1}\times d_{k}$ submatrices $A_i^{(k)}$ (with respect to a basis that diagonalizes $U$).
This standard form is referred to as the \emph{Frobenius form}. Finally, note the following lemma, which provides a relation between irreducible and primitive CP maps.
\begin{lemma}\label{mini}\

\begin{enumerate}
\item Let $\bar{\cT}=\frac{1}{2}(\cT+\cT^2)$. We have $\cT\in\mathfrak{I}$ iff
$\bar{\cT}\in\mathfrak{P}$.
\item Let $\cT\in\mathfrak{I}$. We have that $\cT$ is irreducible of degree $m$ iff there is a set of exactly $m$ (not necessary Hermitian) projectors $\{\tilde{P}_k\}_{k=1,...,m}$ with 
$\sum_{k=1}^m \tilde{P}_k=\id$ and $\tilde{P}_k\neq0$ so that $\cT^m(\tilde{P}_k\cM_D{(\mathbb{C})}\tilde{P}_k)\subseteq \tilde{P}_k\cM_D{(\mathbb{C})}\tilde{P}_k$ and $\cT^m |_{\tilde{P}_k\cM_D{(\mathbb{C})}\tilde{P}_k}$ is primitive.
\end{enumerate}
\end{lemma}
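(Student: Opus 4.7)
For part~(1), my plan is to exploit the fact that $\bar{\cT}=\tfrac12(\cT+\cT^2)$ is a polynomial in $\cT$, so by the spectral mapping theorem $\sigma(\bar{\cT})=\{(\lambda+\lambda^2)/2:\lambda\in\sigma(\cT)\}$ with matching algebraic multiplicities and eigenvectors. Assuming $\cT\in\mathfrak{I}$ with spectral radius $\mu$ and peripheral spectrum $\{\mu\beta^k\}_{k=0}^{m-1}$ (Lemma~\ref{irredprop}), a triangle-inequality argument shows that $|\mu\beta^k+\mu^2\beta^{2k}|/2=(\mu/2)|1+\mu\beta^k|$ is strictly maximised at $k=0$, while the images of non-peripheral eigenvalues $|\lambda|<\mu$ are strictly smaller in modulus. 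Hence $\bar{\cT}$ has $(\mu+\mu^2)/2$ as a simple dominant eigenvalue with positive definite Perron eigenvectors inherited from $\cT$, which by property~\emph{3.}~of Definition~\ref{primitive} places $\bar{\cT}$ in $\mathfrak{P}$. The converse is immediate since any non-trivial reducing hereditary subalgebra of $\cT$ also reduces $\cT^2$ and thus $\bar{\cT}$, contradicting the irreducibility built into primitivity.

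For part~(2), I would reduce to the CPU setting by conjugating with the positive definite right Perron eigenvector $r$ of $\cT$: the rescaled map $\hat{\cT}(X):=\mu^{-1}r^{-1/2}\cT(r^{1/2}Xr^{1/2})r^{-1/2}$ is CPU and irreducible of the same degree, and the Hermitian projectors $Q_\ell$ of its eigen-unitary pull back to the idempotents $\tilde{P}_\ell=r^{1/2}Q_\ell r^{-1/2}$ for $\cT$. For the forward direction of~(2), Lemma~\ref{irredprop} applied to $\hat{\cT}$ supplies the Frobenius form, and the intertwining $A_i Q_{\ell-1}=Q_\ell A_i$ immediately yields $\hat{\cT}(Q_\ell\cM_D(\mathbb{C})Q_\ell)\subseteq Q_{\ell+1}\cM_D(\mathbb{C})Q_{\ell+1}$ cyclically, so $\hat{\cT}^m$ preserves every $Q_\ell\cM_D(\mathbb{C})Q_\ell$ with $Q_\ell$ as a fixed point; a spectral analysis then confirms that the restriction has $1=\mu^m$ as its single peripheral eigenvalue and is therefore primitive. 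Translating the intertwining via the Kraus operators of $\cT$ gives the analogous statement $\cT^m(\tilde{P}_\ell\cM_D(\mathbb{C})\tilde{P}_\ell)\subseteq\tilde{P}_\ell\cM_D(\mathbb{C})\tilde{P}_\ell$ with primitive restriction.

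The reverse direction of~(2) is the main obstacle. After reducing to the CPU case, the first key observation is that idempotents summing to $\id$ in a matrix algebra are automatically pairwise orthogonal---they encode a direct sum decomposition $\mathbb{C}^D=\bigoplus_k\mathrm{range}(\tilde{P}_k)$ whose ranks sum to $D$ by the trace identity---so $\cM_D(\mathbb{C})=\bigoplus_{k,l}\tilde{P}_k\cM_D(\mathbb{C})\tilde{P}_l$ holds as a direct sum of subspaces. Combined with $\hat{\cT}^m$ being CPU and preserving each $\tilde{P}_k\cM_D(\mathbb{C})\tilde{P}_k$, uniqueness of the decomposition $\id=\sum_k\hat{\cT}^m(\tilde{P}_k)=\sum_k\tilde{P}_k$ forces $\hat{\cT}^m(\tilde{P}_k)=\tilde{P}_k$. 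The $m$ idempotents thus lie linearly independently inside the fixed-point space $\cF_m:=\ker(\hat{\cT}^m-\opid)$, which for $\hat{\cT}$ of degree $m'$ is spanned by $\{U^j:\beta^{mj}=1\}$ and forms a commutative $C^*$-subalgebra of dimension $g:=\gcd(m,m')$ whose minimal projectors are $R_j=\sum_{k'\equiv j\pmod g}Q_{k'}$. Inside this commutative algebra every idempotent is Hermitian and a sum of the $R_j$, so the $\tilde{P}_k$ realise a partition of $\{0,\ldots,g-1\}$ giving $m\le g$, hence $m\mid m'$. The concluding step, which I expect to be the delicate part, is to exclude $m<m'$: each $R_j$ corresponds to an orbit of length $m'/g$ under the shift $k'\mapsto k'+m\pmod{m'}$ on the Frobenius labels, and $\hat{\cT}^m\vert_{R_j\cM_D(\mathbb{C})R_j}$ cycles the diagonal blocks $Q_{k'}\cM_D(\mathbb{C})Q_{k'}$ along this orbit, picking up $m'/g$ distinct peripheral eigenvalues; primitivity then forces $m'/g=1$, yielding $m=m'$.
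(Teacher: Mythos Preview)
Your treatment of part~(1) is correct and essentially matches the paper's: both use the spectral mapping $\lambda\mapsto(\lambda+\lambda^2)/2$ for the forward direction. For the converse you invoke the hereditary-subalgebra criterion (Definition~\ref{irreducible}, item~1) instead of the spectral one (item~2) that the paper uses; this is a valid and slightly slicker alternative.

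For part~(2), the forward direction again tracks the paper: reduce to the CPU case, read off the Frobenius projectors $Q_\ell$, and undo the similarity. The reverse direction is where you diverge. The paper argues tersely by dichotomy on the true degree $d$: if $d>m$ then $\tilde{\cT}^m$ retains non-trivial peripheral eigenvalues, and if $d<m$ it has only $d<m$ primitive blocks---either way contradicting the hypothesis. Your route is more structural: you show the $\tilde{P}_k$ are fixed by $\hat{\cT}^m$, hence lie in its fixed-point space, which you identify as the commutative $C^*$-algebra spanned by $\{U^{jm'/g}\}$ with minimal projectors $R_j=\sum_{k'\equiv j\pmod g}Q_{k'}$. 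Counting idempotents gives $m=g$, and then the cyclic action of $\hat{\cT}^m$ on the $m'/m$ Frobenius blocks inside each $R_j$ produces $m'/m$ peripheral eigenvalues of the restriction, forcing $m'=m$ by primitivity. This is correct and more transparent than the paper's sketch; in particular it makes explicit why the non-trivial peripheral eigenvalues of $\hat{\cT}^m$ must actually land in the diagonal corners (via the eigenvectors $R_jU^lR_j$), a point the paper's phrase ``direct sum of primitive blocks'' leaves implicit.

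One step that deserves more care is the phrase ``after reducing to the CPU case'' in the reverse direction. The similarity $X\mapsto r^{-1/2}Xr^{-1/2}$ intertwining $\cT^m$ with $\hat{\cT}^m$ sends the corner $\tilde{P}_k\cM_D(\mathbb{C})\tilde{P}_k$ to $(r^{-1/2}\tilde{P}_kr^{1/2})\,\cM_D(\mathbb{C})\,(r^{1/2}\tilde{P}_kr^{-1/2})$, which is not of the form $\hat{P}_k\cM_D(\mathbb{C})\hat{P}_k$ for a \emph{single} family of idempotents unless the $\tilde{P}_k$ commute with $r$. Your subsequent argument recovers precisely this commutation \emph{a posteriori} (once the $\tilde{P}_k$ are shown to lie in the commutative algebra generated by the Hermitian $R_j$, they are themselves Hermitian and commute with $r=\id$ in the CPU picture), so the conclusion survives, but the logical order should be tightened. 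The paper is equally casual on this point.
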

We prove Lemma~\ref{mini} in Appendix~\ref{AppB}.

\subsection{Irreducible maps with fixed Kraus rank}


The set $\mathfrak{T}$ of completely positive maps on $\cM_D(\mathbb{C})$ constitutes a closed convex cone. The Kraus rank of a CP map $\cT$ equals the rank of the corresponding Choi matrix $\omega(\cT)$ and is bounded by $D^2$. It follows that the boundary of $\mathfrak{T}$ consists of CP maps with Kraus rank $r<D^2$. Furthermore the boundary can be subdivided into convex faces corresponding to rank deficient $\omega$ with identical kernel. Any map in the interior of $\mathfrak{T}$ has maximal Kraus rank and consequently is primitive. Thus $\mathfrak{P}$ constitutes a dense subset of $\mathfrak{T}$ and all non-primitive maps are located at the boundary. In particular, the boundary contains all reducible and all non-primitive irreducible CP maps. A core point in this section will be to study the connectivity of the mentioned subsets of the boundary in both the general and the trace-preserving case. 
We begin by showing that for any $r\geq2$ there are primitive CPTPU (CPTP and unital) maps in $\mathfrak{T}$ that have Kraus rank $r$. In particular, for any $D\geq2$ the boundary of $\mathfrak{T}$ also contains primitive maps.
\begin{lemma}
\label{non-empty}
If $D,r\geq2$ then the set $\mathfrak{P}_{(r)}$ of primitive CP maps of Kraus rank $r$ on $\cM_{D}(\mathbb{C})$ is not empty.
\end{lemma}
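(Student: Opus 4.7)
The plan is to construct the required map explicitly and then bootstrap from $r=2$ to general $r$. First note that since the Kraus rank of any CP map on $\cM_D(\mathbb{C})$ is bounded by $D^2$, the set $\mathfrak{P}_{(r)}$ is vacuous for $r > D^2$; I therefore only need to exhibit a primitive CP map of Kraus rank exactly $r$ for each $2 \le r \le D^2$.

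For the base case $r=2$ I would take $A_1 := \ket{0}\bra{0}$ (a rank-one projector) and $A_2 := S := \sum_{k=0}^{D-1} \ket{(k+1)\bmod D}\bra{k}$ (the cyclic shift on $\mathbb{C}^D$), and set $\cT(X) := A_1 X A_1^\dagger + A_2 X A_2^\dagger$. Since $A_1$ has rank one while $A_2$ has rank $D \ge 2$, the two Kraus operators are linearly independent, so $\cT$ has Kraus rank exactly two. To verify primitivity via point \emph{1.}~of Definition~\ref{primitive}, I would show that the length-$2D$ products of Kraus operators span $\cM_D(\mathbb{C})$. Any such product can be written in the normal form $S^{a_0} A_1 S^{a_1} A_1 \cdots A_1 S^{a_\ell}$ for some $\ell \ge 0$ and $a_i \ge 0$ with $\sum_i a_i = 2D - \ell$; using $S^j\ket{0} = \ket{j\bmod D}$ together with $\bra{0} S^m \ket{0} = \delta_{m \bmod D,\, 0}$, the word vanishes unless the interior exponents $a_1, \dots, a_{\ell-1}$ are multiples of $D$, in which case it collapses to the matrix unit $\ket{a_0 \bmod D}\bra{(-a_\ell) \bmod D}$. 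Setting the interior exponents to zero and letting $\ell$ range over $\{1, \dots, D\}$ gives $a_0 + a_\ell = 2D - \ell \ge D - 1$, so $a_0 \bmod D$ hits every value in $\{0, \dots, D-1\}$; a short calculation shows the resulting matrix units are exactly $\ket{i}\bra{(i+\ell)\bmod D}$ for all $i$. As $\ell$ varies over $\{1, \dots, D\}$, the shift $\ell \bmod D$ covers every residue, so all $D^2$ matrix units appear in the span and $\textnormal{rank}(\omega(\cT^{2D})) = D^2$, proving primitivity.

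For $2 < r \le D^2$ I would augment with additional Kraus operators $A_3, \dots, A_r \in \cM_D(\mathbb{C})$ chosen so that $\{A_1, \dots, A_r\}$ is linearly independent in $\cM_D(\mathbb{C})$; this is possible because $\dim \cM_D(\mathbb{C}) = D^2 \ge r$. The resulting map $\cT'(X) := \sum_{i=1}^r A_i X A_i^\dagger$ has Kraus rank exactly $r$ by construction, and primitivity is inherited at no cost: the Kraus products of length $2D$ defining $(\cT')^{2D}$ contain, as a subset, all length-$2D$ words in $\{A_1, A_2\}$, which already span $\cM_D(\mathbb{C})$, forcing $\textnormal{rank}(\omega((\cT')^{2D})) = D^2$.

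The main obstacle I expect is the combinatorial step in the base case, namely arranging the word structure so that at a single fixed length the achievable matrix units cover \emph{all} residue classes of $(i-j)\bmod D$; the remaining points (linear independence of the Kraus operators and preservation of primitivity under appending Kraus operators) are routine bookkeeping.
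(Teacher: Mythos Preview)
Your argument is correct and takes a genuinely different route from the paper. For the base case $r=2$, the paper uses two \emph{unitary} Kraus operators --- the cyclic shift $U_1$ together with a diagonal unitary $\hat U_0$ whose entries are irrational rotations --- and then verifies that length-$D^2$ products span $\cM_D(\mathbb{C})$ via a Vandermonde determinant. You instead pair the shift with the rank-one projector $\ket{0}\bra{0}$ and extract all matrix units directly from the word combinatorics; this is more elementary and avoids both the irrational phases and the determinant computation. For $r>2$ the paper builds CPTPU maps from scratch using $r$ distinct Weyl operators $U_0^iU_1^j$ (with $\{\id,U_0,U_1\}$ among them), whereas you simply bootstrap from the $r=2$ map by appending arbitrary linearly independent Kraus operators and observing that the spanning words are inherited. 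The trade-off is that the paper's construction yields maps that are simultaneously trace-preserving and unital --- which, although not required by the lemma as stated, matches the CPTP setting used in the subsequent path-connectedness arguments --- while your construction sacrifices this normalization in exchange for a shorter and more transparent proof.
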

\begin{proof}
We construct the respective channels relying on the so-called discrete Weyl system. We define $U_0,U_1\in\cM_D(\mathbb{C})$ by
\begin{align*}
U_0=\sum_{k=0}^{D-1}e^{\frac{2\pi k i}{D}}\proj{k}{k}\quad\textnormal{and}\quad U_1=\sum_{k=0}^{D-1}\proj{k+1}{k}.
\end{align*}
It is well known~\cite{wernerall, Michael:Skript} that the set of $D^2$ matrix products $\{U_0^iU_1^j\}_{i,j=0,...,D-1}$ constitutes an orthogonal basis for $\cM_D(\mathbb{C})$. For any $3\leq r\leq D^2$ we construct a CPTPU map $\cT\in\mathfrak{P}_{(r)}$ by choosing a set of $r$ Kraus operators of the form $\frac{1}{\sqrt{r}}U_0^iU_1^j$ that contains $\{\frac{1}{\sqrt{r}}\id,\frac{1}{\sqrt{r}}U_0,\frac{1}{\sqrt{r}}U_1\}$. In other words
$$\cT(X)=\frac{1}{r}\sum_{(i,j)\in S}U_0^iU_1^jX(U_0^iU_1^j)^\dagger,$$
where $S\subseteq\{0,...,D-1\}^2$ is an index set with $r$ elements and $\{(0,0),(0,1),(1,0)\}\subseteq S$. To see that $\cT$ is primitive consider $r=3$. In this case
the list of $3^{2D-2}$ matrix products $\{A_{i_1}A_{i_2}... A_{i_{2D-2}}\}$ formed from $A_i\in\{\id,U_0,U_1\}$ spans $\cM_D(\mathbb{C})$ as it contains $\{U_0^iU_1^j\}_{i,j=0,...,D-1}$ as a subset. For $r>3$ choosing Kraus operators of the form $\frac{1}{\sqrt{r}}U_0^iU_1^j$ guarantees that $\cT$ has Kraus rank $r$, while the set of matrix products formed from all Kraus operators still spans the entire matrix algebra. For $r=2$ the situation is slightly more complicated. In this case we choose $U_1$ as above but replace $U_0$ by
$$\hat{U}_0=\sum_{k=0}^{D-1}e^{(k+1)i}\proj{k}{k}.$$
Note that the diagonal entries of $\hat{U}_0$ correspond to complex rotations by \emph{irrational} angles. As a consequence for any nonzero $m\in\mathbb{N}$ we have $\hat{U}_0^m\neq\id$, while ${U}_0^D={U}_1^D=\id$. We consider a list of matrix products $\{A_{i_1}A_{i_2}... A_{i_{D^2}}\}$, where $A_i\in\{\hat{U}_{0},{U}_1\}$ and observe that it contains
$$\{\hat{U}_0^{D^2-mD-n}{U}_1^{mD+n}\}_{m,n=0,...,D-1}=\bigcup_{n=0}^{D-1}\left(\{\hat{U}_0^{D^2-mD-n}{U}_1^{n}\}_{m=0,...,D-1}\right)$$
as a subset. The latter set already spans the whole matrix algebra $\cM_D(\mathbb{C})$. To see this we note that the matrix ${U}_1^{n}$ has $D^2-D$ entries that are $0$ and $D$ entries that are $1$. Furthermore, as can be verified easily by direct computation, if for a given $n\in\{0,...,D-1\}$ the matrix ${U}_1^{n}$ has an entry $1$ at a certain position then for any $n'\in\{0,...,D-1\}$, $n'\neq n$ the matrix ${U}_1^{n'}$ has an entry $0$ at this position, i.e.~$\forall k,l,n,n'\in\{0,...,D-1\}$, $n\neq n':$ $\bracket{k}{U_1^n}{l}=1\Rightarrow\bracket{k}{U_1^{n'}}{l}=0$. Consequently the $D$ matrices of the form $U_1^n$ are linearly independent. Since $\hat{U}_0^{p}{U}_1^{n}$, $p\in\{0,...,D-1\}$ has non-zero entries at the same positions as $U_1^{n}$ it follows that for $n\neq n'$ the spaces spanned by $\{\hat{U}_0^{D^2-mD-n}{U}_1^{n}\}_{m=0,...,D-1}$ and $\{\hat{U}_0^{D^2-mD-n'}{U}_1^{n'}\}_{m=0,...,D-1}$ are linearly independent.
%
%
%
%
%
%
It remains to verify that for fixed $n$ the products $\hat{U}_0^{D^2-mD-n}{U}_1^{n}$, $m\in\{0,...,D-1\}$ are linearly independent. For any $m$ we have 
$$\hat{U}_0^{D^2-mD-n}=\textnormal{diag}(e^{(D^2-mD-n)i},e^{2(D^2-mD-n)i},...,e^{D(D^2-mD-n)i}).$$
Hence, to assess linear independence, it is sufficient to consider the Vandermonde-type determinant
$$\det\begin{pmatrix}e^{(D^2-n)i} & e^{(D^2-D-n)i} & \cdots & e^{(D^2-(D-1)D-n)i}\\
e^{2(D^2-n)i} & e^{2(D^2-D-n)i} & \cdots & e^{2(D^2-(D-1)D-n)i}\\
\vdots & \vdots & \cdots & \vdots\\
e^{D(D^2-n)i}& e^{D(D^2-D-n)i} & \dots & e^{D(D^2-(D-1)D-n)i}\end{pmatrix}=e^{\frac{D(D+1)}{2}(D^2-n)i}\prod_{1\leq k<l\leq D}\left(e^{-lDi}-e^{-kDi}\right),$$
which is never zero.


\end{proof}
We know that $\mathfrak{P}$ and $\mathfrak{I}$ constitute {path-connected} subsets of $\mathfrak{T}$. The reason is that any map in the interior of $\mathfrak{T}$ is primitive, so that any two irreducible maps can be connected by a primitive path through the interior of $\mathfrak{T}$. One core point of this work is to show that any two irreducible maps located at the boundary of $\mathfrak{T}$ can be connected by an irreducible and analytic path \emph{within} the boundary. More precisely,
we have the following theorem.
%
\begin{theorem}\label{mainlemma} Let $\mathfrak{T}_{(r)}$ denote the set of linear completely positive maps of Kraus rank $r$ on $\cM_D(\mathbb{C})$ with $r, D\geq2$. Let $\mathfrak{P}_{(r)}\subseteq\mathfrak{T}_{(r)}$ be the subset of primitive maps in $\mathfrak{T}_{(r)}$ and let $\mathfrak{P}_{(r)}^{(TP)}\subseteq\mathfrak{P}_{(r)}$ be the trace-preserving maps in $\mathfrak{P}_{(r)}$. Similarly, let $\mathfrak{I}_{(r)}\subseteq\mathfrak{T}_{(r)}$ be the subset of irreducible maps and $\mathfrak{I}_{(r)}^{(TP)}\subseteq\mathfrak{I}_{(r)}$ the set of maps that are trace-preserving in addition. The following assertions hold:
\begin{enumerate}
\item $\mathfrak{T}_{(r)}$ is path-connected.
\item $\mathfrak{P}_{(r)}$ is a path-connected, dense and relatively open subset of $\mathfrak{T}_{(r)}$.
\item $\mathfrak{P}_{(r)}^{(TP)}$ is a path-connected, dense and relatively open subset of $\mathfrak{T}_{(r)}^{(TP)}$.
\item $\mathfrak{I}_{(r)}$ is a path-connected, dense and relatively open subset of $\mathfrak{T}_{(r)}$.
\item $\mathfrak{I}_{(r)}^{(TP)}$ is a path-connected, dense and relatively open subset of $\mathfrak{T}_{(r)}^{(TP)}$.
\end{enumerate}
Furthermore, any two elements $\cT^{(0)}$ and $\cT^{(1)}$ of any of the sets
$\mathfrak{T}_{(r)}$, $\mathfrak{P}_{(r)}$, $\mathfrak{I}_{(r)}$, $\mathfrak{P}_{(r)}^{(TP)}$, $\mathfrak{I}_{(r)}^{(TP)}$ can be connected via an interpolating path $t\mapsto\cT^{(t)}$ within the respective set such that matrix entries of $\cT^{(t)}$ depend \emph{analytically} on $t\in[0,1]$.

\end{theorem}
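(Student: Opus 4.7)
The plan is to lift all statements to the space of Kraus tuples. Writing $\Phi:\mathbb{C}^{rD^2}\to\mathfrak{T}$, $(A_1,\dots,A_r)\mapsto\sum_i A_i(\cdot)A_i^\dagger$, the set $\mathfrak{T}_{(r)}$ is the image of the Zariski-open set $\mathcal{D}_r$ of linearly independent tuples. The topological input I will use throughout is: the complement of a proper complex-algebraic subvariety $V\subset\mathbb{C}^N$ has real codimension $\geq 2$, hence is Euclidean-open, connected, and analytically path-connected (any two points can be joined by a polynomial curve obtained by Weierstrass-approximating a continuous path and adjusting its endpoints, keeping it inside the open complement of $V$). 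Pushing such paths forward by the polynomial map $\Phi$ gives analytic paths of CP maps. Since $\mathcal{D}_r$ is itself the complement of a codim-$1$ complex-algebraic subvariety in $\mathbb{C}^{rD^2}$, this already yields claim 1.

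For claims 2 and 4 I show primitivity and irreducibility are Zariski-open conditions on $\mathcal{D}_r$. Primitivity is, by Definition~\ref{primitive}.2, the polynomial rank condition $\textnormal{rank}\,\omega(\cT^{D^4})=D^2$. Irreducibility, by Definition~\ref{irreducible}.1, requires absence of any non-trivial projector $P$ satisfying $(\id-P)A_iP=0$ for all $i$; for each fixed dimension $d\in\{1,\dots,D-1\}$ of $P$, the corresponding reducible locus is the image in $\mathbb{C}^{rD^2}$ of the closed algebraic incidence variety
\begin{align*}
\bigl\{(P,(A_i))\in\mathrm{Gr}(d,D)\times\mathbb{C}^{rD^2}\,:\,(\id-P)A_iP=0\ \forall i\bigr\},
\end{align*}
and this image is again Zariski-closed because the Grassmannian $\mathrm{Gr}(d,D)$ is a complete (projective) variety. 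The full reducibility locus is the finite union over $d$ of these closed algebraic sets, and by Lemma~\ref{non-empty} both the primitive and the irreducible subsets are non-empty, so they are Zariski-open and dense in $\mathcal{D}_r$. Applying the topological input and $\Phi$ finishes claims 2 and 4, including relative openness and density in $\mathfrak{T}_{(r)}$.

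For claims 3 and 5 I introduce a Perron-Frobenius retraction. For any $\cT\in\mathfrak{I}_{(r)}$ let $\mu>0$ be its spectral radius and $Y_\cT>0$ the positive-definite left eigenvector normalized by $\textnormal{tr}\,Y_\cT=D$, so $\cT^*(Y_\cT)=\mu Y_\cT$. Define
\begin{align*}
\pi(\cT)(X):=\mu^{-1}\,Y_\cT^{1/2}\,\cT\bigl(Y_\cT^{-1/2}\,X\,Y_\cT^{-1/2}\bigr)\,Y_\cT^{1/2}.
\end{align*}
A direct calculation gives $\pi(\cT)^*(\id)=\id$ (so $\pi(\cT)$ is CPTP), the Kraus operators transform by the invertible similarity $A_i\mapsto\mu^{-1/2}Y_\cT^{1/2}A_iY_\cT^{-1/2}$ (preserving Kraus rank), and Definition~\ref{irreducible}.2 is verified with right eigenvector $Y_\cT^{1/2}X_\cT Y_\cT^{1/2}>0$ and left eigenvector $\id>0$, so $\pi(\cT)$ is irreducible and is primitive exactly when $\cT$ is. Crucially $\pi|_{\mathfrak{I}_{(r)}^{(TP)}}=\textnormal{id}$, since CPTP irreducible maps have $\mu=1$ and $Y_\cT=\id$. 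Because irreducibility makes $\mu,Y_\cT$ simple spectral data they depend analytically on $\cT$ by Kato perturbation theory, so $\pi$ is an analytic retraction $\mathfrak{I}_{(r)}\to\mathfrak{I}_{(r)}^{(TP)}$ that restricts to $\mathfrak{P}_{(r)}\to\mathfrak{P}_{(r)}^{(TP)}$. Composing the analytic paths from claims 2 and 4 with $\pi$ yields the desired paths in the TP sets, and relative openness/density in $\mathfrak{T}_{(r)}^{(TP)}$ follow by restriction (using that the Stiefel manifold of CPTP Kraus-rank-$r$ tuples is a connected real-algebraic manifold on which non-empty Zariski-open subsets are classically dense).

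The step I expect to be the main obstacle is genuine complex-algebraicity of the reducibility locus: only proper \emph{complex} subvarieties automatically have real codimension $\geq 2$ and hence fail to disconnect $\mathcal{D}_r$, so the elimination of the Grassmannian via its properness is what ultimately makes the connectivity argument go through (a merely real semi-algebraic codim-$1$ complement could disconnect its open complement). A secondary delicate point is the analyticity of the square roots $Y_\cT^{\pm 1/2}$ entering $\pi$, which follows from Kato-analyticity of $Y_\cT$ together with Dunford's holomorphic functional calculus applied to $z\mapsto z^{\pm 1/2}$ on a neighborhood of the positive spectrum of $Y_\cT$ that varies continuously with $\cT$.
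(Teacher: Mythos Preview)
Your argument is correct and covers all the claims. The overall architecture---lift to Kraus tuples, identify the relevant loci as complements of proper \emph{complex}-algebraic subvarieties (hence connected and Euclidean-open), then push forward by $\Phi$ and retract to the TP slice via the Perron--Frobenius similarity---matches the paper's strategy closely. The Perron--Frobenius retraction you call $\pi$ is exactly the map the paper uses in part~3 (with the same appeal to Kato analyticity of the simple Perron data), so on the TP side the two proofs coincide.

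Where you genuinely diverge from the paper is in how you establish that the \emph{irreducible} locus is open. The paper never shows directly that reducibility is Zariski-closed; instead it invokes Lemma~\ref{mini}, namely that $\cT$ is irreducible iff $\bar\cT=\tfrac12(\cT+\cT^2)$ is primitive, and then recycles the primitive argument. Your route---rewriting reducibility as the existence of a common invariant subspace $A_iV\subseteq V$, forming the incidence variety over $\mathrm{Gr}(d,D)$, and using properness of the Grassmannian to conclude its image is Zariski-closed---is a legitimately different and arguably more conceptual argument. It buys you an intrinsic reason why the reducible locus has complex codimension $\geq 1$ without appealing to the auxiliary map $\bar\cT$; the paper's approach buys a shorter proof that reuses machinery already in place and avoids any algebraic geometry beyond ``zeros of a nonzero polynomial are isolated in $\mathbb{C}$.'' A second, smaller difference: the paper works with a single complex interpolation parameter $\gamma$ and exhibits the obstructions as zeros of explicit scalar polynomials in $\gamma$ (via the identity $\det\omega=\sum_I|\langle\Psi_-|\psi_{I_1}\otimes\cdots\rangle|^2$), whereas you work globally on $\mathbb{C}^{rD^2}$ and appeal to the general topology of complements of subvarieties. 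One point worth making explicit in your write-up: the phrase ``polynomial rank condition $\mathrm{rank}\,\omega(\cT^{D^4})=D^2$'' is correct but hides the issue that $\omega(\cT)$ itself is only real-polynomial in the $A_i$ (it involves $A_i^\dagger$); what makes the non-primitive locus genuinely \emph{complex}-algebraic is that $\mathrm{rank}\,\omega(\cT^n)$ equals the dimension of the span of the holomorphic vectors $(A_{i_1}\cdots A_{i_n}\otimes\id)|\varphi\rangle$, which is precisely the mechanism the paper's determinant identity encodes.
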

\begin{proof}
\emph{1.}~Let $\cT^{(j)}(X)=\sum_{i=1}^rA_i^{(j)}X(A_i^{(j)})^\dagger$, $j\in\{0,1\}$ be two CP maps of Kraus rank $r$. Let $\gamma:[0,1]\rightarrow\mathbb{C}$ be a curve in the complex plane with $\gamma(0)=0$ and $\gamma(1)=1$. Consider the path of completely positive maps 
\begin{align}\cT^{(\gamma)}(X)=\sum_{i=1}^rA_i^{(\gamma)}X(A_i^{(\gamma)})^\dagger\quad \textnormal{with}\quad A_i^{(\gamma)}=(1-\gamma)A_i^{(0)}+\gamma A_i^{(1)}. \label{path}\end{align}
We prove that $\gamma$ can always be chosen so that $\{A_i^{(\gamma)}\}_{i=1,...,r}$ are linearly independent along $\gamma$. Let $\hat{A}$ denote the $D^2\times r$ matrix with entries $\hat{A}_{(kl),i}^{(\gamma)}=
\left(A^{(\gamma)}_{i}\right)_{kl}$ and $k,l\in\{1,...,D\}$. Since $\cT^{(0)}$ has Kraus rank $r$, there is an $r\times r$ minor ${M}^{(\gamma)}$ of $\hat{A}$ that is non-zero for $\gamma=0$. So ${M}^{(\gamma)}$ is a non-zero polynomial in $\gamma$ whose zeroes are isolated in the complex plane. Hence, we can always choose the path $\gamma$ such that ${M}^{(\gamma)}\neq0$, which implies that $\textnormal{rank}(\hat{A})=r$ along $\gamma$. Note that the latter property can already be achieved choosing $\gamma(t)= t+i\sum_{k=0}^{l-1}x_kt^k$, where $t\in[0,1]$ and $x_k$ are real coefficients: As $t$ varies in
$[0,1]$ $\gamma(t)$ traces out the graph of a degree-$l$ polynomial in the complex plane. It is clear that $l$ and $x_k$ can be chosen such that the graph does not intersect any of the singular points, where $\cT^{(\gamma)}$ has Kraus rank smaller than $r$. If $\gamma(t)= t+i\sum_{k=0}^{l-1}x_kt^k$ then the Kraus matrices $A_i^{(\gamma)}$ and $(A_i^{(\gamma)})^\dagger$
have entries that are polynomials in $t$. Hence $\cT^{(\gamma)}$ can be represented by a matrix whose entries are polynomials in $t$.


\emph{2.} If $\cT^{(j)}$ $j\in\{0,1\}$ are primitive we prove that we can choose (a general continuous path) $\gamma$ in \eqref{path} so that $\cT^{(\gamma)}$ stays primitive along $\gamma$. To this end we analyze when the Choi matrix $$\omega\left((\cT^{(\gamma)})^{D^4}\right)=\sum_{i_1...i_{D^4}}
(A_{i_1}^{(\gamma)}\cdot...\cdot A_{i_{{D^4}}}^{(\gamma)}\otimes\id)\proj{\varphi}{\varphi}(A_{i_1}^{(\gamma)}\cdot ...\cdot A_{i_{{D^4}}}^{(\gamma)}\otimes\id)^\dagger
=:\sum_{I}\proj{\psi_{I}}{\psi_{I}}$$ is positive definite. Here we have abbreviated $(A_{i_1}^{(\gamma)}\cdot...\cdot A_{i_{{D^4}}}^{(\gamma)}\otimes\id)\ket{\varphi}=\ket{\psi_{I}}$ with multi-index $I=(i_1,...,i_{D^4})$. The determinant of $X\in M_n(\mathbb{C})$ can be written as $\det{X}=\bracket{\Psi_-}{X^{\otimes n}}{\Psi_-}$ with $\ket{\Psi_-}=\frac{1}{\sqrt{n!}}\sum_{\rho\in S_{n}}(-1)^{\textnormal{sgn}(\rho)}\ket{\rho(1)}\otimes\cdots\otimes
\ket{\rho(n)}$. Exploiting this we find
\begin{align}
\det{\omega\left((\cT^{(\gamma)})^{D^4}\right)}=\bracket{\Psi_-}{\omega\left((\cT^{(\gamma)})
^{D^4}\right)^{\otimes D^2}}{\Psi_-}=\sum_{I_1...I_{D^2}}\abs{\braket{\Psi_-}{\psi_{I_1}\otimes...\otimes\psi_{I_{D^2}}}}^2.\label{deter}
\end{align}
The vectors $\ket{\psi_{I}}$ have entries that are \emph{polynomial} in $\gamma$ so that $\braket{\Psi_-}{\psi_{I_1}\otimes...\otimes\psi_{I_{D^2}}}$ is a polynomial in $\gamma$.
Hence, if for given $\gamma$ there is $t^*\in(0,1)$ so that the determinant vanishes at $\gamma(t^*)$ one can slightly deform $\gamma$ to circumvent this zero. Hence there is a curve $\gamma$ connecting $\cT^{(0)}$ and $\cT^{(1)}$ so that $\det{\omega\left((\cT^{(\gamma)})^{D^4}\right)}>0$ along $\gamma$ and $\cT^{(\gamma)}$ is primitive. Following the argument in \emph{1.}~we can also ensure that the Kraus rank stays fixed along $\gamma$ so that $\mathfrak{P}_{(r)}$ is path-connected. As before~it is sufficient to consider $\gamma(t)= t+i\sum_{k=0}^{l-1}x_kt^k$ with real coefficients $x_k$ to ensure that $\cT^{(\gamma)}$ remains primitive and has Kraus rank $r$. For this choice of $\gamma$ we find that $A_i^{(\gamma)}$ and $(A_i^{(\gamma)})^\dagger$
have entries that are polynomial in $t$. 

We prove that $\mathfrak{P}_{(r)}\subseteq\mathfrak{T}_{(r)}$ is dense. Let $\cT\in\mathfrak{T}_{(r)}$ and $\cE\in\mathfrak{P}_{(r)}$ and consider again the path~\eqref{path} interpolating between the Kraus operators of $\cT$ and $\cE$. Again \eqref{deter} is a sum of squares of polynomials, which is non-zero because $\cE$ is primitive and appropriate choice of $\gamma$ ensures that this path is primitive everywhere but in $\cT$. It follows that in any neighborhood of any $\cT\in\mathfrak{T}_{(r)}$ there is a primitive map of Kraus rank $r$. In particular
$\mathfrak{I}_{(r)}\subseteq\mathfrak{T}_{(r)}$ is also dense.

We prove that $\mathfrak{P}_{(r)}\subseteq\mathfrak{T}_{(r)}$ is open in the subspace topology of $\mathfrak{T}_{(r)}$. The maps $\cT\mapsto\cT^n$ and $\cT\mapsto\omega{(\cT)}$ are continuous. Hence for any $\varepsilon>0$ there is a $\delta$-neighbourhood $\cU_\delta$ of $\cE\in\mathfrak{P}_{(r)}$ with the property that $\Norm{\omega{(\cE^{D^4})}-\omega{(\cT^{D^4})}}{}\leq\varepsilon$ for any $\cT\in\cU_\delta$. This implies that $\cU_\delta\subseteq\mathfrak{P}_{(r)}$.


\emph{3.}~
Let $\cT^{(j)}$ with $j\in\{0,1\}$ be primitive CPTP maps of Kraus rank $r$. Choose $\gamma(t)=t+i\sum_{k=0}^{l-1}x_kt^k$ as in \emph{2.}~so that $\cT^{(\gamma)}$ is a path of primitive CP maps of Kraus rank $r$ connecting $\cT^{(0)}$ with $\cT^{(1)}$. Let $\rho^{(\gamma)}$ denote the full-rank eigenvector of $(\cT^{(\gamma)})^*$ corresponding to its spectral radius $\mu^{(\gamma)}$. As $(\cT^{(\gamma)})^*$ is primitive $\mu^{(\gamma)}$ is non-degenerate so that $\mu^{(\gamma)}$ and $\rho^{(\gamma)}$ can be chosen as analytic functions of $t$~\cite[Chap.~9]{Lax}, \cite{kato}. (It is a well-known fact that for a given square matrix $A(t)$ whose elements depend analytically on a real parameter $t$ and a simple eigenvalue of $A(0)$, one has that for all $t$ in a sufficiently small neighborhood of $t=0$ there is a corresponding eigenvalue and a unique eigenvector that depend analytically on $t$.) The family
$$\tilde{\cT}^{(\gamma)}(X)=\sum_{i=1}^r\tilde{A}_i^{(\gamma)}X(\tilde{A}_i^{(\gamma)})^\dagger\quad\textnormal{with}\quad \tilde{A}_i^{(\gamma)}:=(\mu^{(\gamma)})^{-1/2} (\rho^{(\gamma)})^{1/2}A_i^{(\gamma)}(\rho^{(\gamma)})^{-1/2}$$
constitutes an analytic path of TPCP maps from $\cT^{(0)}$ to $\cT^{(1)}$. To see that it is primitive note that
$$\det{\omega\left((\tilde{\cT}^{(\gamma)})^{D^4}\right)}
=(\mu(\gamma))^{-D^6}\det{(\rho^{(\gamma)}\otimes (\rho^{(\gamma)})^{-T})}\det{\omega\left(({\cT}^{(\gamma)})^{D^4}\right)}>0.$$
(We use the abbreviation $X^{-T}=(X^{-1})^T$.) Moreover, the Kraus rank does not change by the rescaled similarity transformation $A_i^{(\gamma)}\mapsto\tilde{A}_i^{(\gamma)}$ and the entries of a matrix representation of $\cT^{(\gamma)}$ are analytic functions of $t$.

\emph{4.~and 5.} Denseness and path-connectedness follow from points \emph{2.}~and \emph{3.}~together with the inclusions $\mathfrak{P}_{(r)}\subseteq\mathfrak{I}_{(r)}\subseteq\mathfrak{T}_{(r)}$ and $\mathfrak{P}_{(r)}^{(TP)}\subseteq\mathfrak{I}_{(r)}^{(TP)}\subseteq\mathfrak{T}_{(r)}^{(TP)}$.
That the sets are relatively open follows from the analogous arguments in \emph{2.}~and \emph{3.}~using Lemma~\ref{mini} and the fact that $\cT\mapsto\frac{1}{2}(\cT+\cT^2)$ is continuous.


\end{proof}

\subsection{Irreducible maps with fixed Kraus rank and peripheral spectrum}
An important point about our proof of path-connectedness of $\mathfrak{I}_{(r)}$ is that for $\cT^{(j)}\in\mathfrak{I}_{(r)}\setminus\mathfrak{P}_{(r)}$ the interpolating path is primitive and thus has trivial peripheral spectrum everywhere but at its endpoints. 
%
%
%
%
%
%
%
%
In the following we will investigate if it is possible to construct for two irreducible maps with the same peripheral spectrum an irreducible interpolating path that preserves the peripheral spectrum. In other words we ask if the set of irreducible CPTP maps of fixed degree $m$,
$\mathfrak{I}_{(r,m)}^{(TP)}$ is path-connected. A key role in the characterization of the connected components of $\mathfrak{I}_{(r,m)}^{(TP)}$ is played by the unitary eigenvectors $U=\sum_{k=0}^{m-1}\beta^k P_k$ for the eigenvalue $\beta=e^{2\pi i/m}$ of the corresponding (CPU) adjoint maps, cf.~Lemma~\ref{irredprop}: As it turns out, two elements of $\mathfrak{I}_{(r,m)}^{(TP)}$ are in the same connected component iff their adjoint maps have eigenvectors $U$ for eigenvalue $\beta$ such that the multiplicities $(d_1,...,d_m)$, $d_k=\trace{P_k}$, of eigenvalues of $U$ coincide. Here the ordering of eigenvalues is understood so that $d_k$ is the multiplicity of the eigenvalue $\beta^k$ of $U$.
Note that if $U$ is an eigenvector then so is $\beta^l U$ so that connected components of $\mathfrak{I}_{(r,m)}^{(TP)}$ are characterized by tuples $(d_1,...,d_m)$ of algebraic multiplicities of eigenvalues in $U$ \emph{modulo cyclic permutations}. For any $\cT\in\mathfrak{T}_{(r,m)}^{(TP)}$ we call the equivalence class (under cyclic permutations) of tuples $(d_1,...,d_m)$ the \emph{multiplicity index} of $\cT$.
\begin{theorem}\label{compi}
The set $\mathfrak{I}_{(r,m)}^{(TP)}$ of CPTP maps of Kraus rank $r$ that are irreducible of degree $m$ decomposes into path-connected components. Two maps $\cT^{(j)}\in\mathfrak{I}_{(r,m)}^{(TP)}$, $j\in\{0,1\}$, are contained in the same connected component if and only if $\cT^{(j)}$ have the same multiplicity index. In this case they can be connected via an analytic path.
\end{theorem}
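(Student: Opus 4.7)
The strategy is to prove that the multiplicity index is a complete invariant for the connected components of $\mathfrak{I}_{(r,m)}^{(TP)}$. For invariance, along any continuous path $t\mapsto\cT^{(t)}$ in $\mathfrak{I}_{(r,m)}^{(TP)}$, Lemma~\ref{irredprop} asserts that $\beta=e^{2\pi i/m}$ is a simple eigenvalue of $(\cT^{(t)})^*$; Kato's analytic perturbation theory~\cite{kato} then produces a locally analytic choice of unitary eigenvector $U(t)=\sum_{k=1}^m\beta^kP_k(t)$, and the integer dimensions $d_k(t)=\trace{P_k(t)}$ are continuous integer-valued functions, hence locally constant. The residual freedom $U(t)\mapsto\beta^lU(t)$ realises a cyclic permutation of $(d_1,\ldots,d_m)$, so the cyclic equivalence class is well defined and constant on each connected component.

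For the converse, given $\cT^{(0)}$ and $\cT^{(1)}$ sharing a multiplicity index, I would construct an analytic connecting path in three stages. \emph{Stage A (alignment).} Cyclically relabel so that both endpoints have the same ordered tuple $(d_1,\ldots,d_m)$; pick unitaries $V^{(j)}$ that diagonalise $U^{(j)}$ in accordance with a fixed standard block decomposition $\mathbb{C}^D=\bigoplus_k\mathbb{C}^{d_k}$, and conjugate $\cT^{(j)}$ by the analytic unitary paths $V^{(j)}(t)=\exp(itH^{(j)})$. Unitary conjugation preserves Kraus rank, irreducibility, degree and trace preservation, so this yields analytic paths inside $\mathfrak{I}_{(r,m)}^{(TP)}$ bringing both maps into Frobenius form with respect to the common standard decomposition.

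\emph{Stage B (block interpolation).} In the common Frobenius form the endpoints are encoded by block data $\{B_i^{(k),j}\}$; interpolate linearly as $B_i^{(k)}(\gamma)=(1-\gamma)B_i^{(k),0}+\gamma B_i^{(k),1}$ and reassemble the Kraus operators in Frobenius form. By Lemma~\ref{mini}, irreducibility of degree $m$ is equivalent to primitivity of $\cT^m$ on each block $P_k\cM_D(\mathbb{C})P_k$, and this together with the Kraus-rank-equal-$r$ requirement is encoded by non-vanishing of finitely many polynomials in $\gamma$ (an $r\times r$ minor of the stacked Kraus matrix, and for each $k$ a Choi-matrix determinant exactly of the form~\eqref{deter} applied to $(\cT^m)|_{P_k\cM_D(\mathbb{C})P_k}$). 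Since both endpoints satisfy these conditions the polynomials are non-trivial, so --- exactly as in Theorem~\ref{mainlemma} --- the complex path $\gamma(t)=t+i\sum_{k=0}^{l-1}x_kt^k$ with generic real $x_k$ circumvents all interior zeros, producing an analytic family $\cT^{(\gamma(t))}\in\mathfrak{I}_{(r,m)}$ of (not necessarily TP) maps of Kraus rank $r$, degree $m$ and multiplicity index $(d_1,\ldots,d_m)$.

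\emph{Stage C (TP rescaling).} The analytic family of positive-definite spectral-radius eigenvectors $\rho(t)$ of $(\cT^{(\gamma(t))})^*$ is block-diagonal $\rho(t)=\bigoplus_k\rho_k(t)$, because in Frobenius form $\cT^*$ leaves the block-diagonal subalgebra invariant and the unique positive fixed direction must therefore sit inside it. The rescaling $B_i^{(k)}(t)\mapsto\mu(t)^{-1/2}\rho_{k-1}(t)^{1/2}B_i^{(k)}(t)\rho_k(t)^{-1/2}$ from the proof of Theorem~\ref{mainlemma}.3 then preserves Frobenius form, Kraus rank and block-wise primitivity of $\cT^m$, while making the resulting map TP; at the endpoints $t=0,1$ this rescaling reduces to the identity since $\rho=\id$ and $\mu=1$ for the original CPTP maps. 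The main obstacle is Stage B: one must check that the joint non-vanishing conditions (Kraus rank $=r$ together with primitivity of $\cT^m$ on every block) are captured by a single non-trivial polynomial in $\gamma$ so that the complex-polynomial dodging argument of Theorem~\ref{mainlemma} still goes through; this reduces to observing that the product of the $m+1$ relevant polynomials is non-trivial because the endpoints already satisfy every open condition individually.
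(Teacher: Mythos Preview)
Your approach is essentially identical to the paper's: invariance of the multiplicity index via continuity of the simple eigenvector for $\beta$, alignment of Frobenius forms by unitary conjugation, complex-polynomial dodging on the block interpolation, and rescaling to restore trace preservation. Two technical points deserve attention. First, Lemma~\ref{mini} part~2 has $\cT\in\mathfrak{I}$ as a \emph{hypothesis}, so block-wise primitivity of $\cT^m$ does not by that lemma alone yield irreducibility of $\cT^{(\gamma)}$; the paper closes this by separately verifying that $\bar{\cT}^{(\gamma)}=\tfrac12\bigl(\cT^{(\gamma)}+(\cT^{(\gamma)})^2\bigr)$ is primitive --- one additional polynomial non-vanishing condition of the same type --- after which Lemma~\ref{mini} part~2 becomes applicable to pin the degree at exactly $m$. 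Second, conjugating \emph{both} endpoints to a standard basis and then interpolating produces a concatenation of three analytic arcs, which is only piecewise analytic; the paper obtains a globally analytic path by performing the unitary rotation and the block interpolation simultaneously, via Kraus operators of the form $(W^{(1-t)})^\dagger \tilde A_i^{(\gamma(t))} W^{(1-t)}$ with $W^{(s)}=e^{isH}$.
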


In other words $\cT^{(j)}$ are contained in the same connected component iff $(\cT^{(j)})^*$ have unitary eigenvectors $(\cT^{(j)})^*(U^{(j)})=\beta U^{(j)}$, which have multiplicities of eigenvalues $(d_1^{(j)},...,d_m^{(j)})$ that coincide up to cyclic permutations.
\begin{proof}

Recall that if $\cT\in\mathfrak{I}_{(r,m)}^{(TP)}$ then $\cT^*$ is CPU, has Kraus rank $r$ and is irreducible of degree $m$. So, according to~Lemma~\ref{irredprop}, counting multiplicities $\cT^*$ has $m$ eigenvalues of the form $\beta^k$ with corresponding unitary eigenvectors $U^k$. We begin by showing that the set $\mathfrak{I}_{(r,m)}^{(TP)}$ is not path-connected. Suppose that any pair $\cT^{(j)}\in\mathfrak{I}_{(r,m)}^{(TP)}$ with $j\in\{0,1\}$ can be continuously connected by a curve $\cT^{(\gamma)}\subseteq\mathfrak{I}_{(r,m)}^{(TP)}$ with $\gamma(0)=0$, $\gamma(1)=1$. This implies that there exists a continuous curve of eigenvectors $X^{(\gamma)}$ corresponding to the (constant) eigenvalue $\beta$ of $(\cT^{(\gamma)})^*$~\cite[p.~45]{Ortega}. 
%
%
%
By Lemma~\ref{irredprop} $\beta$ is non-degenerate and has a unitary eigenvector along $\gamma$. Hence, $V^{(\gamma)}=X^{(\gamma)}/\Norm{X^{(\gamma)}}{}$ is a continuous curve of unitary eigenvectors of $(\cT^{(\gamma)})^*$: $(\cT^{(\gamma)})^*(V^{(\gamma)})=\beta V^{(\gamma)}$. Clearly, any other unitary eigenvector $U^{(\gamma)}$ of $(\cT^{(\gamma)})^*$ corresponding to the eigenvalue $\beta$ satisfies $V^{(\gamma)}=\mu(\gamma)U^{(\gamma)}$ with $\abs{\mu(\gamma)}=1$. From Lemma~\ref{irredprop} we can assume that the unitary $U^{(\gamma)}$ has spectral decomposition $U^{(\gamma)}=\sum_{k=0}^{m-1}\beta^k P_k^{(\gamma)}$ with \emph{discrete} eigenvalues $\beta^k$, so that $V^{(\gamma)}=\sum_{k=0}^{m-1}\mu{(\gamma)}\beta^k P_k^{(\gamma)}$. Along the curve $V^{(\gamma)}$ the eigenvalues of $V^{(\gamma)}$ must change continuously~\cite[p.~45]{Ortega}. If $\cT^{(j)}$ had different multiplicity index, we would obtain a contradiction to the continuity of eigenvalues of $V^{(\gamma)}$.

We proceed by proving that if $\cT^{(j)}\in\mathfrak{I}_{(r,m)}^{(TP)}$ have the same multiplicity index then $\cT^{(j)}$ can be connected by a continuous path in $\mathfrak{I}_{(r,m)}^{(TP)}$. We will see later how this yields an analytic path between $\cT^{(j)}$. We write $\cT^{(j)}(X)=\sum_{i=1}^rA_i^{(j)}X(A_i^{(j)})^\dagger$ with $j\in\{0,1\}$ and construct a continuous path in two steps.
\begin{enumerate}
\item We begin with a unitary rotation of $\cT^{(0)}$ to achieve that the Kraus operator of the rotated map and those of $\cT^{(1)}$ are in Frobenius form with respect to the same basis. By assumption $(\cT^{(0)})^*$ and $(\cT^{(1)})^*$ have unitary eigenvectors $U^{(0)}$ and $U^{(1)}$ with the property that counting multiplicities the spectra of $U^{(0)}$ and $U^{(1)}$ coincide. It follows that there is unitary $W$ so that $U^{(1)}=W U^{(0)} W^\dagger$. Clearly, there is a continuous curve of unitaries $W^{(t)}$ with $W^{(0)}=\id$ and $W^{(1)}=W$, which connects $U^{(0)}$ to $U^{(1)}$. On the level of Kraus operators we consider the path $W^{(t)}A_i^{(0)}(W^{(t)})^*$, which performs a continuous rotation so that $W^{(1)}A_i^{(0)}(W^{(1)})^*$ is in Frobenius form in the same basis as $A_i^{(1)}$, see~\emph{4.}~of Lemma~\ref{irredprop}. We write symbolically $A_i^{(1/2)}:=W^{(1)}A_i^{(0)}(W^{(1)})^\dagger$ for the rotated operators. Note that we can always choose $W$ such that $A_i^{(1/2)}$ and $A_i^{(1)}$ have the same block structure, i.e.~$P_k^{(1)}A_i^{(1/2)}=A_i^{(1/2)}P_{k-1}^{(1)}$. The continuous path from $\cT^{(0)}$ to $\cT^{(1/2)}$, which is induced by $W^{(t)}$ is by construction within a subset of maps of $\mathfrak{I}_{(r,m)}^{(TP)}$, whose multiplicity index is given by $\cT^{(0)}$.
%
%
%
%
%
%
\item We connected the Kraus operators $A_i^{(0)}$ to $A_i^{(1/2)}$, which have the same Frobenius form as $A_i^{(1)}$. We proceed by constructing an interpolating path between $A_i^{(1/2)}$ and $A_i^{(1)}$ as in the proof of Theorem~\ref{mainlemma}. Let $\gamma:[0,1]\rightarrow\mathbb{C}$ be a curve in the complex plane with $\gamma(0)=1/2$ (so that $A_i^{(\gamma(0))}=A_i^{(1/2)}$) and $\gamma(1)=1$. Consider the path of CP maps $\cT^{(\gamma)}$ with Kraus operators $A_i^{(\gamma)}=(2-2\gamma) A_i^{(1/2)}+ (2\gamma-1) A_i^{(1)}$.
\begin{itemize}
\item As for primitive maps $\{A_i^{(\gamma)}\}_{i=1,...,r}$ can be linearly dependent only at discrete points in the complex plane. We choose $\gamma$ to circumvent these points,~cf.~Theorem~\ref{mainlemma}.
\item $\cT^{(\gamma)}$ is irreducible if and only if $\bar{\cT}^{(\gamma)}=\frac{1}{2}(\cT^{(\gamma)}+(\cT^{(\gamma)})^2)$ is primitive, cf.~Lemma~\ref{mini}. Observe that the Kraus operators $\{\bar{A}_{i}^{(\gamma)}\}_i$ of $\bar{\cT}^{(\gamma)}$ have entries that are \emph{polynomial} in $\gamma$. As in point \emph{2.}~of Theorem~\ref{mainlemma} we conclude that the determinant $\det{\omega\left((\bar{\cT}^{(\gamma)})^{D^4}\right)}$ is a sum of squares of absolute values of polynomials in $\gamma$, whose zeroes are isolated in the complex plane. As this sum is non-zero we can make sure that $\cT^{(\gamma)}$ is irreducible by appropriate choice of $\gamma$.
\item We prove that ${\cT}^{(\gamma)}$ is irreducible of degree $m$ along $\gamma$. Denote by $P_k^{(1)}$ projectors corresponding to the Frobenius form of $\cT^{(1)}$ i.e.~$P_k^{(1)}A_i^{(1)}=A_i^{(1)}P_{k-1}^{(1)}$ for all $i,k$. By construction we even have that $P_k^{(1)}A_i^{(\gamma)}=A_i^{(\gamma)}P_{k-1}^{(1)}$. Let $\rho^{(\gamma)}$ denote the full-rank eigenvector of $(\cT^{(\gamma)})^*$ corresponding to its largest positive eigenvalue $\mu^{(\gamma)}$. For $l\in\{1,...,m\}$ the matrices $\rho^{(\gamma)}(U^{(1)})^l$ are eigenvectors of $({\cT}^{(\gamma)})^*$ with respective eigenvalues $\mu^{(\gamma)}\beta^l$, which can be checked easily:
\begin{align*}
&({\cT}^{(\gamma)})^*\left(\rho^{(\gamma)}(U^{(1)})^l\right)=
(\cT^{(\gamma)})^*\left(\rho^{(\gamma)}\sum_{k=1}^{m}\beta^{kl} P_k^{(1)}\right)\\
&=(\cT^{(\gamma)})^*\left(\rho^{(\gamma)}\right)\sum_{k=1}^{m}\beta^{kl}P_{k-1}^{(1)}=
\mu^{(\gamma)}\beta^l\rho^{(\gamma)}(U^{(1)})^l
\end{align*}
Hence the peripheral spectrum contains $\{\mu^{(\gamma)}e^{2\pi i\frac{k}{m}}\}_{k=1,...,m}$. Consequently $\cT^{(\gamma)}$ is irreducible of degree at least $m$ but the degree might possibly be an integer multiple $n\geq2$ of $m$. To exclude the latter option we consider the map
$$(\cT^{(\gamma)})^m(X)=\sum_{i_1,...,i_m}^rA_{i_1}^{(\gamma)}\cdot...\cdot A_{i_m}^{(\gamma)}X(A_{i_1}^{(\gamma)}\cdot...\cdot A_{i_m}^{(\gamma)})^\dagger$$
and note that the relation $P_k^{(1)}A_i^{(\gamma)}=A_i^{(\gamma)}P_{k-1}^{(1)}$ implies $P_kA_{i_1}^{(\gamma)}\cdot...\cdot A_{i_m}^{(\gamma)}=A_{i_1}^{(\gamma)}\cdot...\cdot A_{i_m}^{(\gamma)}P_k$. Thus the Kraus operators $A_{i_1}^{(\gamma)}\cdot...\cdot A_{i_m}^{(\gamma)}$ have $m$ blocks on the main diagonal corresponding to projectors $P_k$. These blocks constitute Kraus operators of primitive maps iff $\cT^{(\gamma)}$ is irreducible of degree $m$, see~Lemma~\ref{mini}.
We can choose a path $\gamma$ so that $(\cT^{(\gamma)})^m$ has $m$ primitive blocks along $\gamma$. To see this we analyze when the restrictions $(\cT^{(\gamma)})^m|_{P_k\cM_D(\mathbb{C})P_k}$ are primitive for any $k$,
i.e.~when the product of determinants (see proof of Theorem~\ref{mainlemma})
$$f(\gamma)=\prod_{k=1}^m\det\omega\left(\left((\cT^{(\gamma)})^m|_{P_k\cM_D(\mathbb{C})P_k}\right)^{D^4}\right)$$
stays positive along $\gamma$. By assumption
$\cT^{(1/2)}$ and $\cT^{(1)}$ are of degree $m$ so that $f(1/2),\ f(1) >0$. As before we find that the occurring determinants are sums of absolute values of polynomials in $\gamma$. Hence, $f(\gamma)$ is a non-zero product of sums of absolute values of polynomials whose zeros are isolated in the complex plane. If for given $\gamma$ there is a $t^*\in(0,1)$ with $f(\gamma(t^*))=0$ one can slightly deform $\gamma$ to circumvent this zero.
In conclusion there is a curve $\gamma$ connecting $\cT^{(1/2)}$ and $\cT^{(1)}$ so that $f(\gamma)>0$ along $\gamma$ and $\cT^{(\gamma)}$ is of degree $m$.


%
%
%
%
%
%
%
%
%
%
%
%
%
%
\item Let $\rho^{(\gamma)}$ denote the full-rank eigenvector of $(\cT^{(\gamma)})^*$ corresponding to its largest positive eigenvalue $\mu^{(\gamma)}$. The family $$\tilde{\cT}^{(\gamma)}(X)=\sum_{i=1}^r\tilde{A}_i^{(\gamma)}X(\tilde{A}_i^{(\gamma)})^\dagger\quad\textnormal{with}\quad \tilde{A}_i^{(\gamma)}:=(\mu^{(\gamma)})^{-1/2} (\rho^{(\gamma)})^{1/2}A_i^{(\gamma)}(\rho^{(\gamma)})^{-1/2}$$ constitutes a continuous path of CPTP maps of Kraus rank $r$. It is irreducible because 
\begin{align*}
&\omega\left(\left(\frac{1}{2}\tilde{\cT}^{(\gamma)}+\frac{1}{2}(\tilde{\cT}^{(\gamma)})^2\right)^{D^4}\right)\geq\\
&\mu_{max}^{-2D^4}\cdot(\rho^{(\gamma)}\otimes(\rho^{(\gamma)})^{-T})^{1/2}\omega\left((\bar{\cT}^{(\gamma)})^{D^4}\right)(\rho^{(\gamma)}\otimes(\rho^{(\gamma)})^{-T})^{1/2}
\end{align*}
with $\mu_{max}:=\max_{\gamma}\mu^{(\gamma)}$ and we know that $\omega\left((\bar{\cT}^{(\gamma)})^{D^4}\right)$ is positive definite. Observe that since $\rho(\tilde{\cT}^{(\gamma)})=\frac{1}{\mu^{(\gamma)}}\rho({\cT}^{(\gamma)})$, the degree of $\tilde{\cT}^{(\gamma)}$ always equals the degree of ${\cT}^{(\gamma)}$.

\end{itemize}
Finally, we explain how to construct an analytic path between a pair of maps $\cT^{(j)}\in\mathfrak{I}_{(r,m)}^{(TP)}$ that have the same multiplicity index.
We consider $\gamma$ of the form $\gamma(t)= t+i\sum_{k=0}^{l-1}x_kt^k$ with real coefficients $x_k$. We choose $x_k$ and $l$ so that $\gamma(0)=1/2$ and $\gamma(1)=1$ and the Kraus operators defined by $A_i^{(\gamma)}=(2-2\gamma) A_i^{(1/2)}+ (2\gamma-1) A_i^{(1)}$ yield a path of CP maps, which is irreducible of degree $m$, has Kraus rank $r$, and is entry-wise analytic, see \emph{3.}~in the proof of Theorem~\ref{mainlemma} and the points above. As before the eigenvalue $\mu^{(\gamma)}$ and the corresponding eigenvector $\rho^{(\gamma)}$ can be chosen as entry-wise analytic functions of $t$~\cite[Chapt.~9]{Lax}, so that the Kraus operators $\tilde{A}_i^{(\gamma)}=(\mu^{(\gamma)})^{-1/2} (\rho^{(\gamma)})^{1/2}A_i^{(\gamma)}(\rho^{(\gamma)})^{-1/2}$
yield an analytic path in $\mathfrak{I}_{(r,m)}^{(TP)}$ connecting $\cT^{(1/2)}$ to $\cT^{(1)}$. To connect $\cT^{(0)}$ to $\cT^{(1)}$ within $\mathfrak{I}_{(r,m)}^{(TP)}$, we simultaneously perform the unitary rotation $W$ and the interpolation $\tilde{A}_i^{(\gamma)}$, i.e.~we consider the path
$(W^{(1-t)})^\dagger \tilde{A}_i^{(\gamma)}W^{(1-t)}$. Here we set $W^{(t)}=e^{iHt}$ with suitable $H$, so that $W^{(0)}=\id$, $W^{(1)}=e^{iH}=W$ and $W^{(t)}$ is component-wise analytic in $t$. Clearly the constructed path has the required endpoints, is analytic in $t$ and is contained in $\mathfrak{I}_{(r,m)}^{(TP)}$.
\end{enumerate}

\end{proof}
\section{Quantum Phases}\label{sec:phases}
%
In this section we apply our insights about connectedness of irreducible and primitive CP maps to classify 1D systems with gapped MPS ground state subspace. The following subsection lays down the required framework.
\subsection{Matrix Product States}\label{prel:MPS}
We consider a finite subset $\Lambda\subseteq\mathbb{Z}$ consisting of $N$ contiguous sites, whose Hilbert spaces are each of dimension $r$. Every pure state of this quantum spin system can be written as 
\begin{align*}
\ket{\Psi}=\sum_{i_1,...,i_N}^r\trace{A_{i_1}^{[1]}\cdot A_{i_2}^{[2]}\cdot...\cdot A_{i_N}^{[N]}}\ket{i_1...i_N}
\end{align*}
with site dependent $D_k\times D_{k+1}$ matrices $A_{i_k}^{[k]}$ \cite{Vidal03,MPS}. States of this structure are called Matrix product states (MPS). The integers $D_k$ are referred to as the bond dimension of the MPS. In the case of periodic boundary conditions and translational invariance the matrices can be chosen in a site-independent way~\cite{MPS}, i.\:e.\:
\begin{align*}
\ket{\Psi}=\sum_{i_1,...,i_N}^r\trace{A_{i_1}\cdot A_{i_2}\cdot...\cdot A_{i_N}}\ket{i_1...i_N}
\end{align*}
with $D\times D$ matrices $\{A_i\}_{i=1,...r}$. It follows that a set of matrices $\{A_i\}_{i=1,...,r}$ provides complete description of translationally invariant MPS with periodic boundary conditions. It is an important conceptual step to associate a CP map $\cT(X)=\sum_{i=1}^rA_iXA_i^\dagger$ to such MPS. Obviously the correspondence between maps $\cT$ and MPS is not bijective; for example the
set of Kraus operators $\{XA_i X^{-1}\}_{i=1,...,r}$ with invertible $X$ belongs to the same $\ket{\Psi}$ as above. The following lemma provides a \emph{canonical form}.

\begin{lemma}[\cite{MPSwerner}]\label{canonrep}
Let $\ket{\Psi}$ be a translationally invariant MPS with periodic boundary conditions. The matrices ${A_i}$ can be decomposed as
\begin{align*}
A_i=
 \begin{pmatrix}
  \lambda_1A_i^{(1)} & 0 & \cdots & 0 \\
  0 & \lambda_2A_i^{(2)} & \cdots & 0 \\
  \vdots  & \vdots  & \ddots & \vdots  \\
  0 & 0 & \cdots & \lambda_lA_i^{(b)}
 \end{pmatrix},
\end{align*}
where $\lambda_i\in(0,1]$ and the matrices $A_i^{(j)}$ in the $j$-th block satisfy:

 \emph{1)} The maps $\cT_j(X)=\sum_iA_i^{(j)}X\left(A_i^{(j)}\right)^\dagger$ are unital and irreducible.
 
 \emph{2)} The Kraus operators $A_i^{(j)}$ of $\cT_j$ are given in Frobenius form.
\end{lemma}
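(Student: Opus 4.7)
The plan is to combine the gauge freedom $\{A_i\}\mapsto\{SA_iS^{-1}\}$—which leaves $\trace{A_{i_1}\cdots A_{i_N}}$, and hence $\ket{\Psi}$, invariant—with the Perron-Frobenius structure of the induced CP map $\cT(X)=\sum_i A_i X A_i^\dagger$, and to proceed by induction on the bond dimension $D$. The inductive step splits according to Definition~\ref{irreducible}: either $\cT$ is irreducible (in which case a single block suffices) or $\cT$ is reducible (in which case the Kraus operators acquire a nontrivial block structure that must be exhibited and then diagonalized).

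In the irreducible case, point \emph{2.}~of Definition~\ref{irreducible} supplies a positive definite right eigenvector $\rho>0$ of $\cT$ at the spectral radius $\mu$. Performing the similarity gauge transformation $A_i\mapsto\rho^{-1/2}A_i\rho^{1/2}$, I compute
\begin{align*}
\sum_i\rho^{-1/2}A_i\rho A_i^\dagger\rho^{-1/2}=\rho^{-1/2}\cT(\rho)\rho^{-1/2}=\mu\,\id,
\end{align*}
so writing $\rho^{-1/2}A_i\rho^{1/2}=\sqrt{\mu}\,A_i^{(1)}$ extracts a scalar $\lambda_1=\sqrt{\mu}$ and leaves $\cT_1(X)=\sum_i A_i^{(1)}X(A_i^{(1)})^\dagger$ unital and irreducible. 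Lemma~\ref{irredprop} then produces a unitary $U$ whose eigenbasis simultaneously places each $A_i^{(1)}$ in Frobenius form; the change of basis to this eigenbasis is itself a (unitary) gauge transformation and hence preserves the MPS.

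In the reducible case, point \emph{1.}~of Definition~\ref{irreducible} yields a non-trivial Hermitian projector $P$ with $\cT(P\cM_D(\mathbb{C})P)\subseteq P\cM_D(\mathbb{C})P$. Evaluating this inclusion at $X=P$ gives $\sum_i(\id-P)A_iP A_i^\dagger(\id-P)=0$; since the summands are positive semidefinite, each must vanish, whence $(\id-P)A_iP=0$ for every $i$. In an orthonormal basis adapted to $P$, every Kraus operator therefore takes the block upper-triangular form $A_i=\begin{pmatrix}B_i & C_i\\ 0 & D_i\end{pmatrix}$. The pivotal observation is that products of such matrices remain block upper-triangular with diagonal blocks $B_{i_1}\cdots B_{i_N}$ and $D_{i_1}\cdots D_{i_N}$, so
\begin{align*}
\trace{A_{i_1}\cdots A_{i_N}}=\trace{B_{i_1}\cdots B_{i_N}}+\trace{D_{i_1}\cdots D_{i_N}}.
\end{align*}
Consequently $\ket{\Psi}$ is unchanged if one replaces $\{A_i\}$ by the block-diagonal $\{B_i\oplus D_i\}$, simply erasing the off-diagonal $C_i$. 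Each block has strictly smaller bond dimension, so the induction hypothesis produces the canonical form on each, and a final overall rescaling of the MPS normalizes the resulting scalars to $\lambda_j\in(0,1]$.

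The main obstacle is this reduction step: translating the abstract reducibility condition on $\cT$ into an exploitable block structure of the Kraus operators, and recognizing that the off-diagonal parts $C_i$ can be dropped without affecting $\ket{\Psi}$ via the trace identity above. The remaining steps are either straightforward gauge manipulations or direct appeals to Definition~\ref{irreducible} and Lemma~\ref{irredprop}.
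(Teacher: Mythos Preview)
The paper does not give its own proof of this lemma: it is stated with the citation~\cite{MPSwerner} and used as a black box. There is therefore nothing in the paper to compare your argument against directly.

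That said, your proof sketch follows the standard route from the original source and is essentially correct. The two key moves are exactly the right ones: (i) in the irreducible case, conjugating by the square root of the Perron eigenvector to make the block map unital and then passing to the eigenbasis of the peripheral unitary $U$ from Lemma~\ref{irredprop} to reach Frobenius form; (ii) in the reducible case, extracting from the invariant hereditary subalgebra the relation $(\id-P)A_iP=0$, hence a simultaneous block upper-triangular form, and then observing that the off-diagonal blocks $C_i$ never contribute to traces of products so they may be deleted without changing $\ket{\Psi}$. The induction on bond dimension is clean.

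Two minor points you might tighten. First, Lemma~\ref{irredprop} in the paper is phrased for unital maps in the convention $\cT(X)=\sum_i A_i^\dagger X A_i$, whereas your $\cT_1$ is unital in the other convention; this is harmless (apply the lemma to $\cT_1^*$, or just note the intertwining relation reverses), but worth a sentence. Second, your final normalization ``overall rescaling of the MPS'' to force $\lambda_j\in(0,1]$ is a rescaling of the state vector $\ket{\Psi}$ itself, not a gauge transformation of the matrices; it changes $\ket{\Psi}$ by a global scalar, which is of course irrelevant for the physical state, but strictly speaking this is where one uses that $\ket{\Psi}$ is only defined up to normalization.
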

Observe that if the map $\cT$ associated with $\ket{\Psi}$ is primitive then there is only one block in the above canonical form and the Frobenius structure of this block is trivial. If $\cT$ has one irreducible block of degree $m>1$ this leads to a decomposition of $\ket{\Psi}$ into a number $m$ of $m$-periodic states~\cite[Thm.~5]{MPS}: We have $\trace{A_{i_1}\cdot A_{i_2}\cdot...\cdot A_{i_N}}=\sum_{k=1}^m\trace{P_kA_{i_1}\cdot A_{i_2}\cdot...\cdot A_{i_N}}$ and exploiting \emph{4.}~in Lemma~\ref{irredprop} it follows that $\ket{\Psi}=\sum_{k=1}^m\ket{\Psi_k}$, where $\ket{\Psi_k}$ have site-dependent matrices $(A_k^{[l]})_{i}=P_{k+l-1}A_{i} P_{k+l}$. Reducible $\cT$ can be assumed to have Kraus operators with the block structure of Lemma~\ref{canonrep}. This leads to a decomposition $\ket{\Psi}=\sum_{a=1}^b\lambda_a\ket{\Psi_{A^{(a)}}}$ and each state $\ket{\Psi_{A^{(a)}}}$ admits further decomposition into periodic components. In this sense an MPS characterizes via its canonical form a \emph{set} of states. This decomposition was originally derived in the context of \emph{finitely correlated states (FCS)} on infinite spin chains, see~\cite[Cor.~3.4]{MPSwerner}.

\subsection{MPS as ground states of Hamiltonians}\label{hamsec}
By an interaction $h$ on an interval $[K,L]\subseteq\Lambda$ of range $L-K+1$ we mean a positive semi-definite operator $h\in\cM_D(\mathbb{C})^{\otimes L-K+1}$. The Hamiltonian for a periodic spin system $\Lambda$ is given by the formal expression $$H_\Lambda=\sum_{i=1}^{N}\tau^i(h),$$
where $\tau$ denotes the translation operation by one site and $h$ is extended to $(\mathbb{C}^r)^{\otimes N}$ by tensoring an implicit identity. Given a translationally invariant MPS $\ket{\Psi}=\sum_{i_1...i_N}^r\trace{A_{i_1}\cdot...\cdot A_{i_N}}\ket{i_1...i_N}$ the so-called \emph{parent Hamiltonian}~\cite{MPSwerner,MPS} model corresponds to a particular choice of the interaction $h$, which ensures that $\ket{\Psi}$ is a ground state of $H_\Lambda$.
More precisely, the parent Hamiltonian model is constructed as follows. For a fixed interval $[K,L]\subseteq\Lambda$ and a given subspace $\cS_{[K,L]}\subseteq(\mathbb{C}^r)^{\otimes K-L+1}$ we denote by $h_{\cS_{[K,L]}}$ the projector onto the orthogonal complement of  $\cS_{[K,L]}$. Of particular interest will be the spaces $\cG_{[K,L]}^{\{A_i\}}\subseteq(\mathbb{C}^r)^{\otimes K-L+1}$, which are spanned by the vectors%
$$\ket{\Psi(Y)}=\sum_{i_K...i_L}^r\trace{YA_{i_K}\cdot...\cdot A_{i_L}}\ket{i_K...i_L},$$
where $Y$ are complex $D\times D$ matrices. We write shortly $\cG$ if the interval and the set of matrices are clear.  

\emph{1.~One primitive block:} If $\ket{\Psi}$ has one primitive block in the canonical form, Lemma~\ref{canonrep}, the parent Hamiltonian is $h=h_{\cG_{[K,L]}}$. We have $H_\Lambda\ket{\Psi}=0$ by construction. Moreover, if the corresponding channel has maximal Kraus rank after $A_0\in\mathbb{N}$ iterations (i.e.~$\omega(\cT^{A_0})>0$) then $\ket{\Psi}$ is the unique ground state of $H_\Lambda$ if $N\geq2A_0$ and $L-K+1>A_0$, see \cite[Thm.~10]{MPS}. Furthermore $H_\Lambda$ has a spectral gap $\delta>0$ above the ground state energy, $H_\Lambda |_{(\mathbb{C}^r)^{\otimes L-K+1}\ominus\ket{\Psi}}\geq\delta\id$, even in the limit of an \emph{infinite} chain, see~\cite{MPSwerner,Nacht1} for a rigorous discussion of this point. A further important point about the parent Hamiltonian construction is that if the dimension of the spaces $\cG_{[K,L]}$ is fixed then the projectors $h_{\cG_{[K,L]}}$ depend {analytically} on the MPS representation of $\ket{\Psi}$ and the CPTP $\cT$. (For sufficiently large $\Lambda$ and $[K,L]$ the space $\cG_{[K,L]}$ is the kernel of the density matrix $\rho_{[K,L]}=\tr_{\Lambda-[K,L]}{\proj{\Psi}{\Psi}}$, whose matrix entries depend analytically on those of $\cT$, cf.~\cite[Chapt.~9]{Lax}, \cite[Lem.~5]{wirperturbations} and Section~\ref{classy} for details.)

\emph{2.~Multiple primitive blocks:} Suppose $\ket{\Psi}$ has $b$ primitive blocks in its canonical form (Lemma~\ref{canonrep}), i.e.~$\ket{\Psi}$ decomposes into states $\ket{\Psi_{A^{(a)}}}$
with $a=1,...,b$. The corresponding parent Hamiltonian is given by ${H}=\sum_i\tau^{i}({h_{\cS_{[K,L]}}})$, where $\cS_{[K,L]}=\bigoplus_{a=1}^b\cG_{[K,L]}^{\{A^{(a)}\}}$. If $[K,L]$ and $\Lambda$ are large enough\footnote{The size of $[K,L]$ and $\Lambda$ depends on $b$ and $A_0=\min\{l \ | \ \omega(\cT_a^{l})>0\ \forall a\}$. Moderate upper bounds are provided in~\cite{MPS}.} then the ground state subspace of $H$ is $\textnormal{span}_{a=1,...,b}\{\ket{\Psi_{A^{(a)}}}\}$~\cite[Thm.~11, 12]{MPS}.

\emph{3.~One irreducible block of degree $m>1$:} If $\ket{\Psi}$ has one irreducible block of degree $m>1$ and $m$ is a factor of $N$ this entails a decomposition $\ket{\Psi}=\sum_{q=1}^{m}\ket{\Psi_q}$ into $m$ $m$-periodic states $\ket{\Psi_q}$. If, however, $m$ is not a factor of $N$ then
$\ket{\Psi}=0$, see~\cite[Theorem 5]{MPS}. In the first case blocking $m$ sites of the chain leads to the situation of \emph{2}, where one has $m$ primitive blocks (see Lemma~\ref{mini}) in the canonical form and the above discussion applies. In other words ${h_{\cS_{[K,L]}}}$ is the local parent Hamiltonian for the $m$-blocked chain and leads to the global Hamiltonian ${H}=\sum_i\tau^{i}({h_{\cS_{[K,L]}}})$, which is translationally invariant with respect to blocked sites. However, with respect to the original unblocked chain local Hamiltonians act on intervals $[(K-1)m+1,Lm]$ and $H$ is not translationally invariant. Translational invariance of ${H}$ is achieved by adding local interactions $h_{\cS_{[K,L]}}$ in a way that does not affect the ground states, see~\cite[Sect.~5]{Nacht1} for details.

\emph{4.~General case:} The parent Hamiltonian construction can be extended to the general situation when $\ket{\Psi}$ admits decomposition into a number of irreducible components $\ket{\Psi_{A^{(i)}}}$ each of which is an average of $m_i$ $m_i$-periodic states $\ket{\Psi_{A^{(i)}}}=\frac{1}{m_i}\sum_{q=1}^{m_i}\ket{\Psi_{A^{(i)},q}}$. Note that the respective component is non-zero only if $m_i$ is a factor of $N$. We will not go into details of this construction but its structure should be clear from \emph{1-3}; we refer to~\cite{ Nacht1,MPSwerner} for details.


%

%
%

\subsection{Equivalence classes of gapped parent Hamiltonians}


%
%
\begin{definition}\label{def:phase} Let $H_\Lambda^{(0)}=\sum_i\tau^i(h^{(0)})$ and $H_\Lambda^{(1)}=\sum_i\tau^i(h^{(1)})$ be
translationally invariant and gapped parent Hamiltonians with periodic boundary conditions on $\Lambda$. We say that $h^{(0)}$ and $h^{(1)}$ are \emph{equivalent} iff (after a possible reordering of terms $h$ to reach the same interaction lengths) there is a translationally invariant local Hamiltonian path $H_\Lambda^{(t)}=\sum_i\tau^i(h^{(t)})$, $t\in[0,1]$ with periodic boundary conditions so that:
\begin{enumerate}
\item For all $t\in[0,1]$ and sufficiently large $\Lambda$, the Hamiltonian $H_\Lambda^{(t)}$ has an energy gap above the set of ground states that remains bounded away from zero in the limit $|\Lambda|\rightarrow\infty$ .
\item $h^{(t)}$ is an analytic function of $t\in[0,1]$ and has endpoints $h^{(0)}=h^{(t=0)}$ and $h^{(1)}=h^{(t=1)}$.
\item Local interactions have bounded strength: $\Norm{h^{(t)}}{}\leq1$ in operator norm.
\end{enumerate} 
\end{definition}
Quantum phase transitions are studied on the \emph{infinite} spin chain. As outlined in Section~\ref{hamsec} degeneracy and structure of ground states of $H_\Lambda$ depend on the size of $\Lambda$ and on the interaction range of $h$. For this reason our equivalence relation is defined on local interactions. An important point to note is that we allow reordering and renaming of local interactions.
We will make use of this freedom to achieve the same interaction range in $h^{(0)}$ and $h^{(1)}$ and to choose the interaction range so large that $h^{(t)}$ have the same ground state structure for $t\in[0,1]$. Similar definitions are commonly used to characterize a quantum phase.  As compared to the preceding contribution~\cite{phases} our method does not rely on \emph{blocking}: In~\cite{phases} $k$ physical sites of $\Lambda$ are first blocked into one super-site of dimension $r\cdot k$ and a gapped Hamiltonian path is subsequently constructed on the blocked chain, which is chosen so that only primitive blocks occur in the canonical form of Lemma~\ref{canonrep}. The blocking procedure thus hides the decomposition of MPS into periodic states. In particular, breaking of translational symmetry and associated detection of N\'{e}el order in the MPS/ FCS is lost by blocking~\cite{MPSwerner,Nacht1}. Moreover, Hamiltonian paths obtained after blocking $k$ sites are merely $k$-periodic with respect to the unblocked chain. 
%
%
%
%

In addition to the points of Definition~\ref{def:phase} it is often asked that a quantum phase is stable, i.e.~it should constitute an open set in the space of Hamiltonians~\cite{phases}. In other words in the thermodynamic limit of large $\Lambda$ the perturbed Hamiltonian
$$H'_\Lambda=H_\Lambda+\sum_{i=1}^N\tau^i(\phi)$$
should have a positive energy gap above the ground state when the perturbation $\phi$ is small enough. Robustness of the parent Hamiltonian model with unique ground state was established in~\cite{Stable,Rob} and~\cite[Thm.~2]{wirperturbations}. Hamiltonians with degenerate MPS ground states do not satisfy this property: Different $\ket{\Psi_{A_i}}$ are locally supported on linearly independent spaces and a local perturbation that is chosen in the direction of one of these spaces will change the ground space degeneracy of $H_\Lambda$. In this case stability can be achieved by symmetry protection, i.e.~by imposing a symmetry that permutes ground states such that perturbation affects all ground states simultaneously~\cite{phasesNi,phases}.

\subsection{Classification of gapped parent Hamiltonians}\label{classy}
%
%
It seems generally expected that two parent Hamiltonians share the same phase iff they have same degeneracies of ground states, see~\cite{phases} as well as the recent article~\cite{Bach}. Here, we provide sufficient conditions for equivalence of two given local parent Hamiltonians $h^{(0)}$ and $h^{(1)}$. Our method relies on the following idea: Suppose a family of primitive CPTP maps $\cT^{(t)}$ continuously interpolates between maps $\cT^{(0)}$ and $\cT^{(1)}$. For any $t$ we can consider a parent Hamiltonian $h^{(t)}$ formed from the Kraus operators of $\cT^{(t)}$. As $\cT^{(t)}$ is primitive this Hamiltonian is gapped and it should depend analytically on $t$ when $\cT^{(t)}$ is analytic in $t$. In our analysis a crucial role is played by the Kraus rank and the canonical form of the CPTP maps $\cT^{(t)}$. The Kraus rank determines the \emph{physical dimension} of the local sites of the spin chain. The canonical form reflects \emph{the structure of the ground state subspace} and is crucial for the gappedness of the Hamiltonian path. 


%
%

%
%
\begin{theorem}\label{final}Let $\ket{\Psi^{(j)}}$ with $j\in\{0,1\}$ be translationally invariant MPS and let $h^{(j)}$ be the corresponding parent Hamiltonians as in Section~\ref{hamsec}. If $\ket{\Psi^{(j)}}$ have identical structure and block dimensions in the canonical form, Lemma~\ref{canonrep}, then $h^{(j)}$ are equivalent according to Definition~\ref{def:phase}. 
%
%
\end{theorem}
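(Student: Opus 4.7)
The plan is to lift the CP-map-level interpolation of Theorem~\ref{compi} to an analytic, gapped Hamiltonian path. Let $\cT^{(j)}$ denote the CP map associated with $\ket{\Psi^{(j)}}$ and decompose its Kraus operators into the canonical block-diagonal form of Lemma~\ref{canonrep}, producing irreducible blocks $\cT_a^{(j)}$ and weights $\lambda_a^{(j)}\in(0,1]$. Under the standing hypothesis the corresponding blocks share Kraus rank $r_a$, degree $m_a$ and multiplicity index, so Theorem~\ref{compi} furnishes an analytic path $t\mapsto\cT_a^{(t)}$ within a single connected component of $\mathfrak{I}_{(r_a,m_a)}^{(TP)}$. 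Interpolating the weights linearly, $\lambda_a^{(t)}=(1-t)\lambda_a^{(0)}+t\lambda_a^{(1)}\in(0,1]$, and reassembling in block form yields an analytic family of CP maps $\cT^{(t)}$ whose canonical-form data is constant along $[0,1]$.

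From this family construct $h^{(t)}$ as the local parent-Hamiltonian term associated to $\cT^{(t)}$ following Section~\ref{hamsec}, choosing the interaction range $L-K+1$ once and for all large enough that the sufficient conditions of~\cite{MPS,Nacht1} for the expected ground-state structure and existence of a thermodynamic-limit gap hold simultaneously for every $t\in[0,1]$. Such a uniform choice is possible by compactness of $[0,1]$ together with upper semicontinuity of the requisite iteration bounds (after which each primitive sub-block attains full Kraus rank). Because the canonical-form data is preserved, the dimension of the local ground-state subspace (equivalently the rank of $\rho_{[K,L]}^{(t)}$) is independent of $t$, so the orthogonal projector $h^{(t)}$ depends entry-wise analytically on $t$ by standard analytic perturbation theory for eigenprojectors of constant rank~\cite[Chap.~9]{Lax}. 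For each $t$ the results of Section~\ref{hamsec} give a lower bound on the gap of $H_\Lambda^{(t)}=\sum_i\tau^i(h^{(t)})$ that is independent of $|\Lambda|$ and continuous in $\cT^{(t)}$; compactness of $[0,1]$ then supplies a uniform $\delta>0$ along the entire path. A final rescaling $h^{(t)}\mapsto h^{(t)}/\max_{s\in[0,1]}\Norm{h^{(s)}}{}$ preserves analyticity and the gap (up to a positive constant) while enforcing $\Norm{h^{(t)}}{}\leq 1$ as demanded by Definition~\ref{def:phase}.

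The principal obstacle is that both the analyticity of $h^{(t)}$ and the non-collapse of the gap in the thermodynamic limit hinge on the rigidity of the canonical-form data along the path: if any of the number of blocks, bond dimensions, degrees, or multiplicity indices were to jump, the dimension of the local support space could change and both properties would break. This rigidity is precisely what the refinement Theorem~\ref{compi} guarantees over the coarser path-connectivity of Theorem~\ref{mainlemma}, and is the crux of the argument.
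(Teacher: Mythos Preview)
Your proposal is correct and follows essentially the same strategy as the paper: lift the analytic CP-map path to an analytic path of parent Hamiltonians by showing that the local support space has constant dimension, then invoke analytic perturbation theory for the projector and the gap results of~\cite{MPS,Nacht1}. The paper proceeds case-by-case (one primitive block via Theorem~\ref{mainlemma}, several primitive blocks, one irreducible block via Theorem~\ref{compi}, then the general case), whereas you invoke Theorem~\ref{compi} uniformly for every block, which is slightly more economical since $\mathfrak{I}_{(r,1)}^{(TP)}=\mathfrak{P}_{(r)}^{(TP)}$. Two minor remarks: the paper secures a uniform interaction range directly from the quantum Wielandt bound $A_0\leq D^4$ rather than via your compactness/upper-semicontinuity argument (both work), and in the multi-block case the constancy of $\dim\cS_{[K,L]}^{(t)}$ also needs the linear independence of the subspaces $\cG^{\{A^{(a)}\}}$, which is~\cite[Prop.~4]{MPS}; your blanket reference to the ``sufficient conditions of~\cite{MPS,Nacht1}'' covers this but it is worth making explicit. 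Finally, the rescaling step is redundant: $h^{(t)}$ is already an orthogonal projector, so $\Norm{h^{(t)}}{}\leq 1$ automatically.
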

The above implies that for sufficiently large $\Lambda$ we can interpolate between $H_\Lambda^{(j)}$ with $j\in\{0,1\}$ using a gapped path. On the other hand, if the ground state subsets of $H_\Lambda^{(j)}$ with $j\in\{0,1\}$ have different dimensions, then $H_\Lambda^{(j)}$ cannot be connected via a continuous \emph{gapped} path. 
\begin{proof}[Proof of Theorem~\ref{final}]
An important point related to the parent Hamiltonian model is that even in the primitive case we know that the global Hamiltonian $H_\Lambda$ has unique gapped ground state only if the local interaction range is large enough and $\Lambda$ has enough sites, see Section~\ref{hamsec} for sufficient conditions. We avoid this problem in our proofs by allowing to group and rename local Hamiltonians to reach large enough interaction range, e.g.~in the primitive case it is sufficient to group local Hamiltonians so that they have interaction range $D^4+1$.
If $\ket{\Psi^{(j)}}$ have identical structure and block dimensions in the canonical form then we construct an analytic path interpolating between $h^{(j)}$ as follows.

\begin{enumerate}
\item \emph{One primitive block of dimension $D$:} Suppose $\ket{\Psi^{(j)}}$ have one primitive block in the canonical form and share the same bond dimension $D$. We consider the parent Hamiltonian construction with spaces $\cG_{[1,L]}^{(j)}$ belonging to $\ket{\Psi^{(j)}}$, where we choose $A_0=D^4\geq\max_{j}{A_0^{(j)}}$, cf.~Section~\ref{hamsec}.
We know from Theorem~\ref{mainlemma} that $\mathfrak{P}_{(r)}^{(TP)}$ is path connected, which provides an associated family of spaces $\cG_{[1,L]}^{(t)}$ spanned by vectors
$$\sum_{i_1,...,i_L}^r\trace{YA_{i_1}^{(t)}\cdot...\cdot A_{i_L}^{(t)}}\ket{i_1,...,i_L},\quad Y\in\cM_D(\mathbb{C})$$
interpolating between $\cG_{[1,L]}^{(j)}$. These paths can be chosen such that the matrix entries of $A_{i_1}^{(t)}$ depend analytically on $t$. Note that primitivity implies that the dimension of $\cG_{[1,L]}^{(\gamma)}$ is $D^2$ for any $t\in[0,1]$. The density matrices
\begin{align*}
\rho^{(t)}=\frac{1}{D}\sum_{i_1....i_{L}\atop{j_1...j_{L}}}^{d}
\trace{A^{(t)}_{i_1}\cdot...\cdot A^{(t)}_{i_{L}} (A^{(t)}_{j_{L}})^\dagger\cdot...\cdot (A^{(t)}_{j_{1}})^\dagger}\proj{i_1...i_{L}}{j_1...j_{L}}.
\end{align*}
constitute an analytic path with $\textnormal{Im}(\rho^{(t)})=\cG_{[1,L]}^{(t)}$. As $\rho^{(t)}$ have fixed rank and depend analytically on $t$ it follows from~\cite[Chapt.~9]{Lax} that projectors $h^{(t)}$ onto $\textnormal{ker}(\rho^{(t)})$ form a path that is analytic in $t$. This establishes an analytic path of local Hamiltonians $h^{(t)}$, which is gapped as $\ket{\Psi^{(t)}}$ is primitive. 

\item \emph{Multiple primitive blocks:} Suppose that the states $\ket{\Psi^{(j)}}$ decompose into $b$ primitive blocks as in Section~\ref{hamsec} \emph{2}. We consider the parent Hamiltonian construction with spaces $\cS_{[1,L]}^{(j)}=\bigoplus_{a=1}^b\cG_{[1,L]}^{(j),{\{A^{(a)}\}}}$, where we assume that all spaces $\cG^{(j)}$ have the same dimension $(D^{(j)})^2$. Using Theorem~\ref{mainlemma} we find primitive paths $\cT^{(t)}_a$
interpolating between the maps $\cT^{(j)}_a$ that make up the blocks in $\cT^{(j)}$, cf.~Lemma~\ref{canonrep}. Note that the primitivity of $\cT^{(t)}_a$ can be achieved simultaneously for all $a$ by \emph{one} analytic choice of $\gamma$ as points where $\cT^{(\gamma)}_a$ is not primitive are isolated in the complex plane, see Theorem~\ref{mainlemma}. As in point \emph{1.} above, this yields analytic deformations of spaces $\cG_{[1,L]}^{(t),{\{A^{(a)}\}}}$ interpolating between $\cG_{[1,L]}^{(j),{\{A^{(a)}\}}}$. Hence, to find an interpolating path between $h_{\cS_{[1,L]}^{(j)}}$ we only need to show that the dimension of spaces $\cS_{[1,L]}^{(t)}$ remains fixed. This is guaranteed by~\cite[Prop. 4]{MPS}, which asserts that under the assumption that $L$ is large enough the spaces $\cG^{{\{A^{(a)}\}}}_{[1,L]}$ are linearly independent.

\item \emph{One irreducible block of multiplicity index $(d_1,...,d_m)$:} If $\ket{\Psi^{(j)}}$ have associated CPTP maps $\cT^{(j)}$ that are irreducible and of the same multiplicity index $(d_1,...,d_m)$ with $\sum_id_i=D$, then an analytic interpolating path can be constructed by the following procedure. From Theorem~\ref{compi} we have an analytic path ${\cT}^{(t)}$ that lies in $\mathfrak{I}^{(TP)}_{(r,m)}$ and interpolates between $\cT^{(j)}$. Together with the parent Hamiltonian construction of Section~\ref{hamsec} \emph{3.}~this yields an analytic gapped path $h^{(t)}$ interpolating between
${h}^{(0)}$ and ${h}^{(1)}$. The corresponding ground state is the $m$-periodic state $\ket{\Psi^{(t)}}$. An important point is that to construct this path we could not first block $m$ sites of the chain and then simply employ the discussion of point \emph{3.}~above; it is not clear if the resulting Hamiltonian path could be \lq\lq{}unblocked\rq\rq{} to be interpreted as a path on a spin chain of local dimension $r$. 
\item \emph{Multiple irreducible blocks:} This case is an immediate consequence of points \emph{3.-4.}~above.

\end{enumerate}
\end{proof}

Theorem~\ref{final} does not provide a complete classification of gapped MPS phases. We did not address the situation when the canonical forms of $\ket{\Psi^{(j)}}$ have blocks of \emph{different bond dimension}. For instance if $\ket{\Psi^{(j)}}$ have one primitive block each and different bond-dimensions $D^{(0)}<D^{(1)}$, then the spaces $\cG_{[1,L]}^{(j)}$ have different dimensions $(D^{(0)})^2<(D^{(1)})^2$. Clearly this implies that there is no continuous path of projectors, which interpolates between $h^{(j)}$. In particular the corresponding parent Hamiltonians \textit{cannot be continuously connected by a path of parent Hamiltonians}. Furthermore we did not discuss the converse assertion, i.e.~which conditions on $h^{(j)}$ (beyond the same dimension of the ground state manifold) are necessary to conclude that $h^{(j)}$ are inequivalent. A natural question to ask in this context would be, if the fact that $\mathfrak{I}_{(r,m)}^{(TP)}$ is not path-connected implies that $h^{(j)}$ corresponding to different indices of multiplicity are, in any reasonable sense, not in the same \lq\lq{}quantum phase\rq\rq{}.

\acknowledgements{We would like to thank Sven Bachmann, Toby Cubitt, Christian Neumaier, M{a}ris Ozols and William Matthews for fruitful discussions. We acknowledge financial support from the European Union under project QALGO (Grant Agreement No. 600700). This work was made possible partially through the support of grant
\#48322 from the John Templeton Foundation. The opinions
expressed in this publication are those of the authors and do not
necessarily reflect the views of the John Templeton Foundation.
}
\appendix
\section{Proofs}
\subsection{Equivalent characterization of primitivity}\label{equivalpr}
The equivalence of the points in Definition~\ref{primitive} was established for TP maps in~\cite{Wielandt}. Here, we generalize the equivalence to general CP maps.

To any CP map $\cT$ we can associate a map $\cF(X)=\frac{\cT(X)}{\trace{\cT(X)}}$ for all $X\in\cM_D(\mathbb{C})$. $\cF$ is continuous and maps the compact convex set $\cS=\{\sigma\in\cM_D(\mathbb{C})\: |\: \sigma\geq0,\: \trace{\sigma}=1\}$ of density matrices into itself.
By Brouwer's fixed point theorem there is $\rho\in \cS$ with $\cF(\rho)= \rho$ and thus $\cT(\rho)= \mu\rho$, $\mu=\trace{\cT(\rho)}$. 

Now consider \emph{1.} in Definition~\ref{primitive}. Let $n$ be chosen so that $\omega(\cT^n)>0$ (i.e.~$\omega$ is positive definite). It follows that for any vector $\psi$ we have $$\left(\id\otimes\bra{\psi}\right)\omega(\cT^n)\left(\id\otimes\ket{\psi}\right)=\cT^n\left(\proj{\bar{\psi}}{\bar{\psi}}\right)>0,$$
where $\bar{\cdot}$ denotes complex conjugation. Hence for any nonzero $\rho\geq0$ we have $\cT^n(\rho)>0$. If now the eigenvector $\rho$ with $\cT(\rho)= \mu\rho$ had not full rank we would have $\cT^n(\rho)= \mu^n\rho\ngtr0$, which would contradict the assumption $\omega(\cT^n)>0$. Hence $\rho$ has maximal rank and we can consider the map
$$\tilde{\cT}(\cdot)=\frac{1}{\mu}\rho^{1/2}\cT^*(\rho^{-1/2}\cdot\rho^{-1/2})\rho^{1/2},$$
which is CPTP. It is shown in~\cite[Thm.~1]{Wielandt} that
if there exists $n$ so that $\omega(\tilde{\cT}^n)>0$, then $\omega(\tilde{\cT}^{D^4})>0$. However, as for all $n$ we have that
$$\det(\omega(\tilde{\cT}^n))=\mu^{-nD^2}\det(\rho\otimes\rho^{-T})\det(\overline{\omega(\cT^n)}),$$
it follows $\omega(\tilde{\cT}^n)>0\Leftrightarrow\omega(\cT^n)>0$, which establishes the equivalence between \emph{1.}~and~\emph{2.}.

Moreover, the condition $\omega(\tilde{\cT}^n)>0$ for some $n$ by~\cite[Prop.~3]{Wielandt} is equivalent to $\tilde{\cT}$ having only one eigenvalue $\lambda$ of magnitude $1$ and the corresponding eigenvector being positive definite. As $1$ is an eigenvalue of $\tilde{\cT}$
with $\tilde{\cT}^*(\id)=\id$, we have that $\lambda=1$, the spectral radius of $\tilde{\cT}$ is a non-degenerate eigenvalue with positive left- and right- eigenvectors and all other eigenvalues having strictly smaller magnitude. Note that $\cT$ is related to $\tilde{\cT}$ by a conjugation, a similarity transform and a global stretching by a factor of $\mu$. Hence, $\cT$ has the properties of \emph{3.} if and only if \emph{1.} holds.
\subsection{Proof of Lemma~\ref{mini}}\label{AppB}
\begin{enumerate}
\item
Suppose that $\cT$ is CP and irreducible. Let $\mu$ be the spectral radius of $\cT$ and note (\emph{2.}~in Definition~\ref{irreducible}) that $\mu$ is a non-degenerate eigenvalue of $\cT$. The spectrum of $\bar{\cT}$ is simply $\{\frac{1}{2}(\lambda+\lambda^2)\}_{\lambda\in\rho(\cT)}$. Hence, $\bar{\cT}$ has spectral radius $\frac{1}{2}(\mu+\mu^2)$, which is also an eigenvalue of $\bar{\cT}$. It is easy to check that $\abs{\lambda}\leq\mu$ with $\lambda\neq\mu$ implies $\frac{1}{2}\abs{\lambda+\lambda^2}<\frac{1}{2}(\mu+\mu^2)$. As $\mu$ is non-degenerate
it follows that the spectral radius of $\bar{\cT}$ is non-degenerate and moreover that the peripheral spectrum is trivial. The left- and right- eigenvectors of $\cT$ corresponding to $\mu$ are positive and are also (left-/right-) eigenvectors of $\bar{\cT}$ for $\frac{1}{2}(\mu+\mu^2)$. Thus $\bar{\cT}$ is primitive.

Suppose that $\bar{\cT}$ is primitive. If $\mu$ was a degenerate eigenvalue of $\cT$ then $\frac{1}{2}(\mu+\mu^2)$ would be a degenerate eigenvalue of $\bar{\cT}$, which is impossible. Moreover, the left- and right eigenvectors of $\cT$ corresponding to $\mu$ are also eigenvectors of $\bar{\cT}$ corresponding to $\frac 1 2 (\mu+\mu^2)$. Hence, they have to be positive definite and $\cT$ is irreducible.

\item Suppose that $\cT$ is CP and irreducible. Let $\mu$ be the spectral radius of $\cT$. From \emph{2.}~in Definition~\ref{irreducible} we have that $\mu$ is a non-degenerate eigenvalue of $\cT$ and that the corresponding (right-) eigenvector $\rho$ is positive definite. The map 
$$\tilde{\cT}^*(\cdot)=\frac{1}{\mu}\rho^{-1/2}\cT(\rho^{1/2}\cdot\rho^{1/2})\rho^{-1/2}$$
is irreducible and unital.

Suppose $\cT$ is irreducible of degree $m$ then $\tilde{\cT}^*$ is also irreducible of degree $m$ since we applied a rescaled similarity transformation. It follows from \emph{4.}~in Lemma~\ref{irredprop} that there is a set of $m$ projectors $P_k\neq0$ with $\sum_{k=1}^mP_k=\id$ and $P_k \tilde{A}_i= \tilde{A}_iP_{k-1}$. For the map
$$(\tilde{\cT}^*)^m(X)=\sum_{i=1}^r\tilde{A}_{i_1}\cdot...\cdot\tilde{A}_{i_m}(X)(\tilde{A}_{i_1}\cdot...\cdot\tilde{A}_{i_m})^\dagger$$ this implies that $\tilde{A}_{i_1}\cdot...\cdot\tilde{A}_{i_m}P_k=P_k\tilde{A}_{i_1}\cdot...\cdot\tilde{A}_{i_m}$, which says that the Kraus operators $\tilde{A}_{i_1}\cdot...\cdot\tilde{A}_{i_m}$ can be written with $m$ blocks on the main diagonal corresponding to projectors $P_k$. Since 
$(\tilde{\cT}^*)^m$ is unital the restriction $(\tilde{\cT}^*)^m|_{P_k\cM_D(\mathbb{C})P_k}$ is unital, too. Since $\cT$ is of degree $m$ the map $(\tilde{\cT}^*)^m$ has an $m$-fold degenerate eigenvalue $1$ and all other eigenvalues have strictly smaller magnitude. This implies that $1$ is a non-degenerate eigenvalue of $(\tilde{\cT}^*)^m|_{P_k\cM_D(\mathbb{C})P_k}$ with positive eigenvector $\id$ and all other eigenvalues of $(\tilde{\cT}^*)^m|_{P_k\cM_D(\mathbb{C})P_k}$ have strictly smaller magnitude. To conclude that $(\tilde{\cT}^*)^m|_{P_k\cM_D(\mathbb{C})P_k}$ is primitive it remains to show that it has a positive definite left-eigenvector corresponding to the eigenvalue $1$. Let $\tilde{\sigma}$ denote the positive-definite (right-) eigenvector of $\tilde{\cT}$ corresponding to the eigenvalue $1$.
It follows from the intertwining-relation $P_k \tilde{A}_i= \tilde{A}_iP_{k-1}$ that
$\sum_{k=1}^m P_k\tilde{\sigma}P_k$ is an eigenvector with eigenvalue $1$ as well:
$$\tilde{\cT}(\sum_{k=1}^mP_k\tilde{\sigma}P_k)=\sum_{k=1}^mP_k\tilde{\sigma}P_k.$$
But as $\tilde{\cT}$ is irreducible $1$ is a non-degenerate eigenvalue so that
$\tilde{\sigma}$ is block-diagonal with respect to $P_k$, i.e.~$\tilde{\sigma}=\sum_{k=1}^m P_k\tilde{\sigma}P_k$. It follows that $P_k\tilde{\sigma}P_k$ is a positive definite left-eigenvector for the eigenvalue $1$ of $(\tilde{\cT}^*)^m|_{P_k\cM_D(\mathbb{C})P_k}$, which therefore has to be primitive. Finally, set 
$\tilde{P}_k=\rho^{1/2} P_k \rho^{-1/2}$ and observe that the above discussion implies that $\cT^m(\tilde{P}_k\cM_D{(\mathbb{C})}\tilde{P}_k)\subseteq \tilde{P}_k\cM_D{(\mathbb{C})}\tilde{P}_k$ and that $\cT^m|_{\tilde{P}_k\cM_D(\mathbb{C})\tilde{P}_k}$ is primitive.

We now prove the converse assertion. As before $\cT$ is CP and irreducible. If $\cT$ is irreducible of degree $d>m$ then $\tilde{\cT}^m$ has eigenvalues of magnitude $1$ that are not $1$. For this reason $\cT^m$ cannot be a direct sum of primitive blocks. If $\cT$ is irreducible of degree $d<m$ then $\tilde{\cT}^d$ has $d<m$ primitive blocks. Hence, either does $\tilde{\cT}^m$ have non-trivial eigenvalues of magnitude $1$ or $d<m$ primitive blocks.

  
\end{enumerate}

\bibliographystyle{abbrv}

%
%
%

%
%
%
%
%
%
%
%
%
%
%

%
%

\end{document}